\newtheorem{theorem}{Theorem}
\newtheorem{corollary}[theorem]{Corollary}
\newtheorem{definition}{Definition}
\newtheorem{proposition}[theorem]{Proposition}
\begin{document}

\setlength{\abovedisplayskip}{.2cm}
\setlength{\belowdisplayskip}{.2cm}

\title{Multipartite Monotones for Secure Sampling \\ by Public Discussion From Noisy Correlations\vspace{-1.1em}}  
\author{\IEEEauthorblockN{Pradeep Kr. Banerjee}
\IEEEauthorblockA{Indian Institute of Technology Kharagpur\\
Email: pradeep.banerjee@gmail.com}}
\maketitle

\begin{abstract}
We address the problem of quantifying the cryptographic content of probability distributions, in relation to an application to secure multi-party sampling against a passive \textit{t-adversary}. We generalize a recently introduced notion of assisted common information of a pair of correlated sources to that of \textit{K} sources and define a family of monotone rate regions indexed by \textit{K}. This allows for a simple characterization of all \textit{t-private} distributions that can be statistically securely sampled without any auxiliary setup of pre-shared noisy correlations. We also give a new monotone called the residual total correlation that admits a simple operational interpretation. Interestingly, for sampling with non-trivial setups (\textit{K} \textgreater \;2) in the public discussion model, our definition of a monotone region differs from the one by Prabhakaran and Prabhakaran (ITW 2012).
\end{abstract}

\begin{IEEEkeywords}
assisted common information, monotones, unconditional security, secure multi-party sampling.
\end{IEEEkeywords}

\IEEEpeerreviewmaketitle

\section{Introduction}
\vspace{2mm}
Suppose two parties, Alice and Bob working in distant labs have access to a certain set of nonlocal resources (e.g., noisy correlations or channels) and wish to simulate or realize the functionality of a target resource (e.g., oblivious transfer, a noiseless secret key, etc.).  Information-theoretic cryptography is concerned with the questions of \emph{feasibility} and \emph{efficiency} or rate of such reductions against computationally-unbounded adversaries. Given a set of $K$ parties, we focus on a restricted class of resources that takes no inputs from the parties, and following the execution of a distributed communication protocol over a public discussion channel, generates outputs $\{{Y_a}\}_{a=1}^K$ that approximately simulates a pre-specified joint distribution ${p_{{Y_1}, \ldots ,{Y_K}}}$. The protocol is required to be \emph{t-private}, i.e., any coalition of up to $t{\text{ }}( < K)$ honest-but-curious parties learns nothing more about the non-coalition parties' outputs than what they can derive from their own set of outputs. The problem is an instance of secure multi-party sampling (a form of secure multi-party computation with no inputs) that has recently gained a lot of currency in the information theory literature [1]--[4]. As a simple example, suppose Alice and Bob wish to sample pairs of the form, $((Y_1, Y_2): \Pr \{ Y_1 = Y_2\}  \ne \tfrac{1}{2})$. If they try to generate such a pair by talking to each other, they will necessarily end up violating 1-\emph{privacy}. On the other hand, pairs of the form ${Y_1}=({U_1},Q),\, {Y_2}=(Q,{U_2})$ where ${U_1}, {U_2}, Q$ are independent can be generated on the fly. However, outside this class of \emph{trivial} distributions, cryptographically useful non-trivial pairs $(Y_1, Y_2)$ cannot be securely realized from scratch, i.e., without the aid of an auxiliary \emph{setup} in the form of a trusted source of noisy correlations [1]--[3].

The earliest known impossibility result for secure 2-party sampling appears in the problem of mental poker [6]. Here two distant parties simulate the act of randomly sampling a disjoint pair of hands from a common deck of cards without using a trusted arbiter. Most relevant to the current work are the works on \emph{monotones}, real-valued functions of joint distributions that cannot increase under monopartite or local operations and noiseless public communication (LOPC). Monotones were first introduced in [5] as classical counterparts of entanglement or LOCC (local operations and classical communication) monotones to study the asymptotic rate of resource conversion under LOPC. Such rates are limited by the amount of resources contained in the source and target probability distributions. Monotones based on G{\'a}cs and K\"{o}rner's notion of the \emph{common part} of a pair of correlated sources [8] were introduced in [1] and later extended to the statistical case in [4]. Comparing the value of the monotone on the setup and protocol output random variables gives an upper bound on the rate of secure 2-party sampling.
Prabhakaran and Prabhakaran [2] developed a tighter upper bound technique using the concept of a \emph{monotone region} based on \emph{assisted common information}, a generalization of the G{\'a}cs-K\"{o}rner common information [8]. 
In [3], the same authors explored the power of different setups (or its lack thereof) in the multi-party scenario for different communication models, viz., the private channels model (parties linked via a complete network of bilateral secure channels) and the public discussion model.
A related work on the private channels model [7] gave a weak characterization of the class of \emph{t-private} distributions that are securely realizable from scratch, by reducing the problem to the 2-party case via a partition argument.

\emph{Contributions}. We address both the questions of feasibility and efficiency of statistically secure multi-party reductions in relation to sampling in the public discussion model. The main tool we develop is a generalization of the bivariate monotone region introduced in [2]. Our statistical impossibility result when specialized to the scenario of perfectly secure sampling from scratch, recovers the characterization in [3]. However, for the more general problem with non-trivial setups $(K > 2)$, our definition of a monotone region differs from the one in [3] and can give strictly better bounds on the rates of secure $K$-party protocols. We also give a new monotone called the residual total correlation that admits a simple operational interpretation.

\section{Preliminaries}
\vspace{2mm}
Random variables (RVs) and their finite alphabets are denoted using uppercase letters $X$ and script letters $\mathcal{X}$. We write $p_X$  to denote the distribution (pmf) of a discrete RV $X$. $X-Y-Z$ denotes that $X,Y,Z$ form a Markov chain satisfying ${p_{XYZ}} = {p_{XY}}{p_{Z|Y}}$. $A\backslash B$ denotes usual set-theoretic subtraction. The total variational distance between distributions $p_X$ and $p_{X'}$ is defined as $\mathsf{TV}(p_X, p_{X'}) \triangleq \tfrac{1}{2}{\lVert p_X-p_{X'}\lVert}_1$. For a nonnegative real coordinate space $\mathbb{R}_ + ^{d}$, the increasing hull of $\mathsf{A} \in \mathbb{R}_ + ^{d}$ is defined as $i(\mathsf{A}) \triangleq \{a \in \mathbb{R}_ + ^{d}: \exists a' \in \mathsf{A}$ s.t. $a \geq a'\}$ (where the comparison is coordinate-wise) [2].

For a pair $(X_1,X_2) \sim {p_{X_1X_2}}$, let $\mathcal{P}_{X_1,X_2}$ be the set of all RVs $Q$ jointly distributed with $(X_1,X_2)$. For ${p_{Q|X_1X_2}} \in \mathcal{P}_{X_1,X_2}$, $(X_1,X_2)$ is said to be \emph{perfectly resolvable} [2], if the residual information $I(X_1;X_2|Q)=0$, and $H(Q|X_1) = H(Q|X_2) = 0$. We then say that $Q$ \emph{perfectly resolves} $(X_1,X_2)$.

G{\'a}cs and K\"{o}rner (GK) [8] defined common information (CI) of the pair $(X_1,X_2) \sim {p_{X_1X_2}}$ as the maximum rate of common randomness (CR) that Alice and Bob, observing sequences $X_1^n$ and $X_2^n$ separately, can \emph{extract} without any communication. 
\begin{displaymath}
{C_{GK}}(X_1;X_2) \triangleq \mathop {\max }\limits_{\substack{Q:H(Q|X_1) = 0 \\ \hspace{3mm} H(Q|X2) = 0}} H(Q) = \mathop {\max }\limits_{\substack{ Q-X_1-X_2 \\ Q-X_2-X_1}} I(X_1X_2;Q).
\end{displaymath}
CR thus defined, is a far stronger resource than correlation, in that the latter does not result in common random bits, in general [8].  Nevertheless, when communication is an available resource, Alice and Bob can unlock hidden layers of potential CR. Following communication, the CR rate increases to $I(X_1;X_2)$.

Wyner [9] defined CI as the minimum rate of CR needed to \emph{generate} $X_1$ and $X_2$ separately using local operations (independent noisy channels: $Q \to X_1,Q \to X_2$) and no communication.
\begin{equation*}
{C_W}(X_1;X_2) \triangleq \mathop {\min }\limits_{Q:X_1 - Q - X_2} I(X_1X_2;Q),{\text{ }}|\mathcal{Q}| \leq |\mathcal{X}_1||\mathcal{X}_2|.
\end{equation*}
The three notions of CI are related as, ${C_{GK}}(X_1;X_2) \leq I(X_1;X_2) \leq {C_W}(X_1;X_2)$ with equality holding iff $(X_1,X_2)$ is perfectly resolvable, whence ${C_{GK}}(X_1;X_2) = I(X_1;X_2) \Leftrightarrow I(X_1;X_2) = {C_W}(X_1;X_2)$ [12].

\vspace{.5\baselineskip}
\emph{Common information duality in relation to the generalized Gray-Wyner Network}.
Consider the generalized Gray-Wyner (GW) distributed lossless source coding network [10], [11] shown in Fig. 1(a). The network jointly encodes $K$ discrete, memoryless correlated sources using a common message and $K$ private messages, and separately decodes each private message using the common message as side information. Let ${X_\mathcal{A}} \triangleq {\{ {X_a}\} _{a \in \mathcal{A}}}$ be a $K$-tuple of RVs ranging over finite sets $\mathcal{X}_a$ where $\mathcal{A}$ is an index set of size $K$.

\vspace{.5\baselineskip}
\begin{theorem}[\text{[11]}]
The optimal rate region ${\Re _{GW}}({X_\mathcal{A}})$ for the generalized GW network is given by
\begin{displaymath}
{\Re _{GW}}({X_\mathcal{A}}) = \begin{cases}
(\{{R_a}\} _{a = 1}^K,{R_0}) \in \mathbb{R}_ + ^{K + 1}:\exists {p_{Q|{X_\mathcal{A}}}} \in {{\hat{\mathcal{P}}}_{{X_\mathcal{A}}}},\\
\text{s.t. } {R_0} \geq I({X_\mathcal{A}};Q), \\ 
\hspace{5.5mm} {R_a} \geq H({X_a}|Q){\text{ }}\forall a \in \mathcal{A},\\
\end{cases}
\end{displaymath}
where ${\hat{\mathcal{P}}_{{X_\mathcal{A}}}}$ is the set of all conditional pmf's ${p_{Q|{X_\mathcal{A}}}}$ s.t. the cardinality of the alphabet $\mathcal{Q}$ of the auxiliary RV $Q$ is bounded as $|{\mathcal{Q}}| \leq \prod\nolimits_{a = 1}^K {|{\mathcal{X}_a}}| + 2$.
\end{theorem}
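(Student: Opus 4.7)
The plan is to follow the standard proof scheme for the classical Gray--Wyner region and lift it from two to $K$ sources. Specifically, I would establish achievability by random binning and a conditional typicality argument, and the converse by a single-letterization with an appropriately chosen auxiliary RV, closing the proof with a Fenchel--Eggleston--Carath\'eodory cardinality bound on $\mathcal{Q}$.

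For achievability, fix a conditional pmf $p_{Q|X_{\mathcal{A}}}$ and generate a random codebook of $2^{n(I(X_{\mathcal{A}};Q) + \epsilon)}$ i.i.d.\ sequences $Q^n$ according to $p_Q^{\otimes n}$. Given the source outputs $X_{\mathcal{A}}^n$, the joint encoder finds an index $m_0$ such that $(Q^n(m_0), X_{\mathcal{A}}^n)$ lies in the strongly typical set; by the covering lemma this succeeds with probability tending to $1$. The common message is $m_0$ at rate $R_0 = I(X_{\mathcal{A}};Q)+\epsilon$. For each $a \in \mathcal{A}$, the private encoder then independently bins $X_a^n$ into $2^{n(H(X_a|Q)+\epsilon)}$ bins; decoder $a$, receiving $(m_0, m_a)$, recovers $X_a^n$ as the unique sequence in bin $m_a$ jointly typical with $Q^n(m_0)$, which succeeds with high probability by the conditional typicality lemma.

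For the converse, suppose rates $(R_0,\{R_a\})$ are achievable with messages $(M_0, \{M_a\})$ and decoders $\hat X_a^n = g_a(M_0,M_a)$. For the common rate, $nR_0 \geq H(M_0) \geq I(X_{\mathcal{A}}^n; M_0) = \sum_{i=1}^n I(X_{\mathcal{A},i}; M_0 \mid X_{\mathcal{A}}^{i-1}) = \sum_i I(X_{\mathcal{A},i}; M_0, X_{\mathcal{A}}^{i-1})$, using the i.i.d.\ structure of the source. For each private rate, Fano's inequality yields $H(X_a^n \mid M_0, M_a) \leq n\epsilon_n$, hence $n(R_a+\epsilon_n) \geq H(M_a \mid M_0) \geq H(X_a^n \mid M_0) \geq \sum_i H(X_{a,i} \mid M_0, X_{\mathcal{A}}^{i-1})$, where the last step conditions on a larger set. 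Defining $Q_i \triangleq (M_0, X_{\mathcal{A}}^{i-1})$ and introducing a time-sharing variable $T$ uniform on $\{1,\ldots,n\}$ independent of the sources, the single-letter auxiliary $Q \triangleq (Q_T, T)$ together with $X_a = X_{a,T}$ satisfies $R_0 \geq I(X_{\mathcal{A}}; Q)$ and $R_a \geq H(X_a \mid Q) - \epsilon_n$. Letting $\epsilon_n \downarrow 0$ and taking the closure yields the claimed region.

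Finally, the cardinality bound on $\mathcal{Q}$ follows from the Fenchel--Eggleston--Carath\'eodory theorem applied to the set of distributions on $\mathcal{X}_1\times\cdots\times\mathcal{X}_K$: one needs $\prod_a |\mathcal{X}_a|-1$ linear functionals to preserve the joint pmf $p_{X_{\mathcal{A}}}$ and a constant number of additional ones to preserve the rate coordinates $I(X_{\mathcal{A}};Q)$ and $\{H(X_a|Q)\}_{a\in\mathcal{A}}$, giving the stated bound. The only nonroutine step here is the collapse of these $K+1$ rate functionals into the two extra coordinates claimed; I expect this to be the main technical obstacle, resolved by noting that once $p_{X_{\mathcal{A}}}$ is fixed, preserving $H(X_{\mathcal{A}}\mid Q)$ together with $H(X_a\mid Q)$ for all $a$ is equivalent to preserving the $K+1$ corresponding rate constraints, and standard linear-algebraic reductions trim the required dimension to $\prod_a|\mathcal{X}_a|+2$. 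Overall, the proof is a direct but careful extension of the $K=2$ argument, with the main care needed in the single-letterization step to get the correct joint conditioning on $X_{\mathcal{A}}^{i-1}$ rather than on $X_a^{i-1}$ alone.
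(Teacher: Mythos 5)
You should first note that the paper itself offers no proof to compare against: Theorem~1 is imported from reference [11] (Liu, Xu and Chen) and is used as a black box; the appendix only proves Theorems~2 and~4. Judged on its own, your outline is the standard Gray--Wyner argument lifted from $2$ to $K$ sources, and the routine parts are sound: the covering-lemma construction of the common description at rate $I(X_{\mathcal{A}};Q)+\epsilon$; the private descriptions at rates $H(X_a|Q)+\epsilon$ (binning is not even needed here, since the Gray--Wyner encoder is centralized and already knows $Q^n(m_0)$, so conditional enumeration of the typical set suffices and avoids any Markov-lemma care); and the converse with $Q_i=(M_0,X_{\mathcal{A}}^{i-1})$, time sharing $T$, and $Q=(Q_T,T)$, where the identity $I(X_{\mathcal{A},i};M_0\mid X_{\mathcal{A}}^{i-1})=I(X_{\mathcal{A},i};M_0,X_{\mathcal{A}}^{i-1})$ and the bound $H(X_{a,i}\mid M_0,X_a^{i-1})\ge H(X_{a,i}\mid M_0,X_{\mathcal{A}}^{i-1})$ are exactly the right moves (your Fano step should read $H(M_a\mid M_0)\ge H(X_a^n\mid M_0)-n\epsilon_n$; as written you dropped the Fano term, which is harmless but should be fixed).

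The one genuine gap is the step you yourself flag as the main obstacle: the claimed collapse of the $K+1$ rate functionals into two extra coordinates is asserted, not proved, and nothing in your sketch delivers it. The straightforward support-lemma bookkeeping needs $\prod_a|\mathcal{X}_a|-1$ functionals to preserve $p_{X_{\mathcal{A}}}$, one more to preserve $H(X_{\mathcal{A}}\mid Q)$ (equivalently $I(X_{\mathcal{A}};Q)$), and one for each $H(X_a\mid Q)$, $a\in\mathcal{A}$, giving $|\mathcal{Q}|\le\prod_a|\mathcal{X}_a|+K$; these $K+1$ quantities are in general affinely independent functionals of $p_{X_{\mathcal{A}}\mid Q=q}$, so the ``standard linear-algebraic reduction'' you invoke does not trim them to two, and the stated bound $\prod_a|\mathcal{X}_a|+2$ agrees with the honest count only at $K=2$. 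Saying ``preserving $H(X_{\mathcal{A}}\mid Q)$ together with all $H(X_a\mid Q)$ is equivalent to preserving the $K+1$ rate constraints'' merely restates what must be preserved; it does not reduce the dimension. For the purposes of this paper the issue is cosmetic---any finite cardinality bound makes the single-letter region well defined, compact and computable, and the exact constant is whatever [11] establishes---but as a proof of the theorem as stated, your argument either needs a genuinely different cardinality argument for the $+2$, or should claim only $\prod_a|\mathcal{X}_a|+K$.
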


A trivial lower bound to ${\Re _{GW}}({X_\mathcal{A}})$ follows from basic information-theoretic considerations [10].
\begin{displaymath}
\begin{gathered}
  {\Re _{GW}}({X_\mathcal{A}}) \subseteq {\mathfrak{L}_{GW}}({X_\mathcal{A}}) \hfill \\
  \hspace{12mm} = \left\{ \begin{gathered}
  ({R_\mathcal{A}},{R_0}):{R_0} + {R_a} \geq H({X_a}){\text{ }}\forall a \in \mathcal{A}, \hfill \\
  \hspace{16mm} {R_0} + \sum\nolimits_{a = 1}^K {{R_a}}  \geq H({X_\mathcal{A}}) \hfill \\ 
\end{gathered}  \right\}. \hfill \\ 
\end{gathered} 
\end{displaymath}

Existing notions of CI can be viewed as extreme points for the corresponding common rate $R_0$ in the GW network (for $K=2$ see Problem 16.28--16.30, pg. 394 in [12]). For the generalized GW network, the CI duality is explicit when considering the complementary efficiency requirements of the first and second rate bundlings shown in Fig. 1(a). The inefficiency is manifest in the gap between ${\Re _{GW}}({X_\mathcal{A}})$ and the lower bound ${\mathfrak{L}_{GW}}({X_\mathcal{A}})$. 

When the sum-rate into each decoder (second bundling) is efficient (i.e., ${R_0} + {R_a} = H({X_a}),{\text{ }}\forall a \in {\mathcal{A}}$),  the maximum common rate is ${C_{GK}}({X_1}; \ldots ;{X_K})$ with the inefficiency in the first bundling being given by
\begin{align*}
  {\Delta _1} &= {R_0} + \sum\nolimits_{a = 1}^K {{R_a}}  - H({X_{\mathcal{A}}}) \hfill \\
    &= \sum\nolimits_{a = 1}^K {H({X_a}|Q)}  - H({X_{\mathcal{A}}}|Q)  = I({X_1}; \ldots ;{X_K}|Q)\tag{1}\label{K1-inefficiency} 
\end{align*} 
where the quantity, $I({X_1}; \ldots ;{X_K})$ is the total correlation [5] and is defined as $I({X_1}; \ldots ;{X_K})$ $\triangleq$ $\sum\nolimits_{a = 1}^K {H({X_a})}  - H({X_{\mathcal{A}}})$ $=$ $\sum\nolimits_{i = 1}^{K - 1} {I({X_1} \ldots {X_i};{X_{i + 1}})}$.
\begin{figure}[!t]
\centering
\includegraphics[width=2.5in]{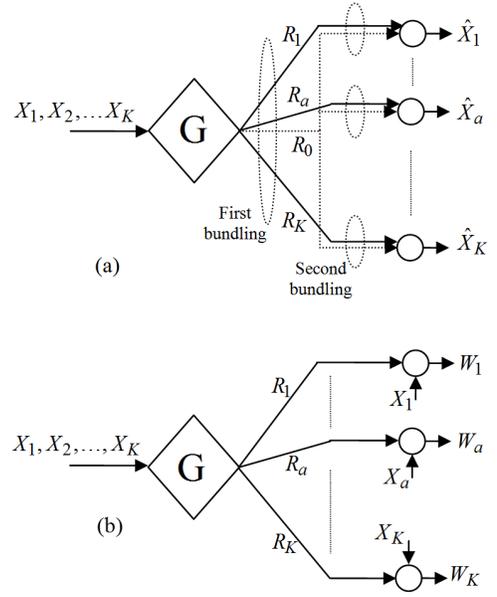}
\caption{(a) The generalized Gray-Wyner distributed source coding network (b) The generalized assisted common information setup}\vspace{-2.5em}
\label{fig_sim}
\end{figure}

When the sum-rate out of the $K$ encoders (first bundling) is efficient (i.e., ${R_0} + \sum\nolimits_{a = 1}^K {{R_a}}  = H({X_{\mathcal{A}}})$), the minimum common rate is ${C_{W}}({X_1}; \ldots ;{X_K})$  with the inefficiency in the second bundling being given by
\begin{align*}
  {\Delta _2} &= \sum\nolimits_{a = 1}^K ({{R_0} + {R_a} - H({X_a})})  \hfill \\
   &= \sum\nolimits_{a = 1}^K ({I({X_{\mathcal{A}}};Q)} + H({X_a}|Q) - H({X_a}))\\
   &\mathop=\limits^{{\text{(a)}}} \sum\nolimits_{a = 1}^K {I({X_{\mathcal{A}\backslash a}};Q|{X_a})}  = \sum\nolimits_{a = 1}^K {{\Delta _{2a}}} \tag{2}\label{K2-inefficiency}
\end{align*} 
where ${\Delta _{2a}} = I({X_{\mathcal{A}\backslash a}};Q|{X_a})$ captures the inefficiency of the $a$-th decoder and (a) follows from writing $I({X_{\mathcal{A}}};Q)$ as $I({X_{\mathcal{A}\backslash b}};Q) + I({X_b};Q|{X_{\mathcal{A}\backslash b}})$ $=$ $I({X_a};Q) + I({X_{\mathcal{A}\backslash ab}};Q|{X_a}) + I({X_b};Q|{X_{\mathcal{A}\backslash b}})$ $=$ $I({X_a};Q) + I({X_{\mathcal{A}\backslash a}};Q|{X_a})$. ${\Delta _1}$ and ${\Delta _2}$ are functions from ${\hat{\mathcal{P}}_{{X_\mathcal{A}}}} \to \mathbb{R}_ + ^{K + 1}$. In particular for $K=2$, the inefficiencies in the first and second bundlings are given by
\begin{align*}
  {\Delta _1} &= I({X_1};{X_2}|Q)  \\
  {\Delta _2} &= {\Delta _{21}} + {\Delta _{22}} = I({X_2};Q|{X_1}) + I({X_1};Q|{X_2})\tag{3}\label{2-inefficiency}
\end{align*}

Maximum efficiency of the first bundling occurs when ${\Delta _2} = 0$, i.e., $Q - {X_a} - {X_{\mathcal{A}\backslash a}},\forall a \in {\mathcal{A}}$. Similarly, maximum efficiency of the second bundling occurs when ${\Delta _1} = 0$, i.e., ${X_i} - Q - {X_j},{\text{ }}i \ne j,{\text{ }}\forall i,j \in {\mathcal{A}}$. It is easy to see that, 
\begin{align*}
\mathop {\min }\limits_{{\Delta _2} = 0} {\Delta _1} &= \mathop {\min }\limits_{Q - {X_a} - {X_{\mathcal{A}\backslash a}},\forall a \in {\mathcal{A}}} I({X_1}; \ldots ;{X_K}|Q)\\
&=I({X_1}; \ldots ;{X_K})-\mathop {\max }\limits_{\substack{Q - {X_a} - {X_{\mathcal{A}\backslash a}},\\ \hspace{3mm} \forall a \in {\mathcal{A}}}} I({X_1}; \ldots ;{X_K}|Q) \\
&=I({X_1}; \ldots ;{X_K}) - {C_{GK}}({X_1}; \ldots ;{X_K}) \tag{4}\\
\mathop {\min }\limits_{{\Delta _1} = 0} {\Delta _2} &= \mathop {\min }\limits_{\substack{{X_i} - Q - {X_j}, i \ne j, \\ \hspace{3mm} \forall i,j \in {\mathcal{A}}}} \sum\nolimits_{a = 1}^K {I({X_{\mathcal{A}\backslash a}};Q|{X_a})}\\
&= \mathop {\min }\limits_{\substack{{X_i} - Q - {X_j}, i \ne j, \\ \hspace{3mm} \forall i,j \in {\mathcal{A}}}} I({X_1} \ldots {X_K};Q) - I({X_1}; \ldots ;{X_K}) \\
&={C_W}({X_1}; \ldots ;{X_K}) - I({X_1}; \ldots ;{X_K}) \tag{5}
\end{align*}
Clearly, ${C_{GK}}({X_1}; \ldots ;{X_K}) = I({X_1}; \ldots ;{X_K}) \Leftrightarrow I({X_1}; \ldots ;{X_K})$ $= {C_W}({X_1}; \ldots ;{X_K})$.

It is interesting to note that, recently Prabhakaran and Prabhakaran [2] have introduced a rate region for a 3-party communication problem called the \emph{assisted residual information region}, ${\mathfrak{T}}({X_1};{X_2})$, which is the increasing hull of the set of all triples of the form $({\Delta _{21}}, \Delta _{22},\Delta _{1})$ $=$ $(I({X_2};Q|{X_1}),I({X_1};Q|{X_2}),I({X_1};{X_2}|Q))$. ${\mathfrak{T}}$ enjoys a certain monotonicity property lacking in the original GW region. From (3), it follows that ${\mathfrak{T}}({X_1};{X_2})$ is the image of ${\Re _{GW}}({X_1};{X_2})$ under an affine map that computes the inefficiencies of the first and second bundlings. Thus, ${\mathfrak{T}}({X_1};{X_2})$ formalizes the complementary efficiency requirements in terms of a rate-information trade-off region. 
Maximum efficiency occurs when ${\mathfrak{T}}({X_1};{X_2})$ includes the origin, which occurs when $({X_1},{X_2})$ is perfectly resolvable. At all other instances when the common core $Q$ fails to completely resolve the dependence between $({X_1},{X_2})$, ${\mathfrak{T}}({X_1};{X_2})$ is bounded away from the origin [2].

\section{Main contributions}
\vspace{2mm}
\subsection{The Generalized Assisted Residual Information Region}
Consider the setup in Fig. 1(b). Let ${X_\mathcal{A}} \triangleq {\{ {X_a}\} _{a \in \mathcal{A}}}$ be a $K$-tuple of RVs ranging over finite sets ${{\mathcal{X}}_a}$, where ${\mathcal{A}}$ is an index set of size $K$ and let $\{ {X_\mathcal{A}},i\} _{i = 1}^\infty$ be a sequence of independent copies ${X_{\mathcal{A},i}} \triangleq {\{ {X_{a,i}}\} _{a \in \mathcal{A}}}$ of ${X_\mathcal{A}}$ drawn i.i.d. $\sim {p_{{X_\mathcal{A}}}}$. $K$ terminals independently having access to one of the $K$ components of such a source are required to produce RVs ${\{ {W_a}\} _{a \in \mathcal{A}}}$ that must all agree with each other with high probability. An omniscient genie $\mathsf{G}$ having access to $X_{\mathcal{A}}^n$ assists the terminals by privately sending them rate-limited messages ${M_a} = f_a^n(X_\mathcal{A}^n)$, $a \in {\mathcal{A}}$ over noiseless links so that the terminals can independently compute ${W_a} = g_a^n(X_a^n,{M_a})$, $a \in {\mathcal{A}}$. We say that a $K$-tuple of rates $\{ {R_a}\} _{a = 1}^K$ \emph{enables residual information rate} ${R_0} \geq 0$ for ${X_\mathcal{A}}$ if for every $\epsilon > 0$ and $n$ sufficiently large, there exists deterministic mappings:
\[\begin{gathered}
  f_a^n:\mathcal{X}_1^n \times  \ldots  \times \mathcal{X}_K^n \to \{ 1, \ldots ,{2^{n({R_a} + \epsilon)}}\} ,{\text{ }}a \in \mathcal{A}, \hfill \\
  g_{a}^n:\mathcal{X}_a^n \times \{ 1, \ldots ,{2^{n({R_a} + \epsilon)}}\}  \to \mathbb{Z},{\text{ }}a \in \mathcal{A}, \hfill \\ 
\end{gathered} \]
where $\mathbb{Z}$ is the set of integers, s.t. $\forall i,j,a \in {\mathcal{A}}$
\[\begin{gathered}
  \Pr \{ g_i^n(X_i^n,{\text{ }}{M_i}) \ne g_j^n(X_j^n,{\text{ }}{M_j})\}  \leq \epsilon,{\text{ }}i \ne j, \hfill \\
  \tfrac{1}{n}I({X_1^n}; \ldots ;{X_K^n}|g_a^n(X_a^n,{\text{ }}{M_a})) \leq {R_0} + \epsilon. \hfill \\ 
\end{gathered} \]

\begin{definition}
The $(K+1)$-dimensional assisted residual information (ARI) rate region is defined as follows.
\[\begin{gathered}
{\mathfrak{T}}({X_{\mathcal{A}}}) \triangleq \{ (\{ {R_a}\} _{a = 1}^K,{R_0}):\{ {R_a}\} _{a = 1}^K \text{ enables residual} \\ 
\text{information rate } \mathop{R_0} \text{ for } {X_{\mathcal{A}}}\}.
\end{gathered}\]
\end{definition}

Denoting by ${\hat{\mathcal{P}}_{{X_\mathcal{A}}}}$ as the set of all conditional pmf's ${p_{Q|{X_\mathcal{A}}}}$ s.t. the cardinality of the alphabet $\mathcal{Q}$ of $Q$ is bounded as $|{\mathcal{Q}}| \leq \prod\nolimits_{a = 1}^K {|{\mathcal{X}_a}}| + 2$, the boundary of $\mathfrak{T}({X_{\mathcal{A}}})$ is made up of $(K+1)$-tuples of the form $\left( {{{\{ {\Delta _{2a}}\} }_{a \in \mathcal{A}}},{\Delta _1}} \right)$, and the rate region has the following characterization.
\begin{theorem} [Generalized $(K+1)$-dimensional assisted residual information region]
\begin{displaymath}
{\mathfrak{T}}({X_{\mathcal{A}}}) = \begin{cases}
(\{{R_a}\} _{a = 1}^K,{R_0}) \in \mathbb{R}_ + ^{K + 1}:\exists {p_{Q|{X_\mathcal{A}}}} \in {{\hat{\mathcal{P}}}_{{X_\mathcal{A}}}},\\
\text{s.t. } {R_0} \geq \sum\nolimits_{i = 1}^{K - 1} {I({X_1} \ldots {X_i};{X_{i + 1}}|Q)} = {\Delta _{1}}, \\ 
\hspace{5.5mm} {R_a} \geq I({X_{\mathcal{A}\backslash a}};Q|{X_a}) = {\Delta _{2a}}, {\text{ }}\forall a \in \mathcal{A}, \\
\end{cases}
\end{displaymath}
Also, ${\mathfrak{T}}({X_{\mathcal{A}}})$ is continuous, convex, and closed.
\end{theorem}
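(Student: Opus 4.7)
\emph{Proof plan.} The plan is to prove matching converse and achievability bounds on $\mathfrak{T}(X_\mathcal{A})$, bound the cardinality of the auxiliary $Q$ via a support-lemma argument, and then deduce the topological properties as direct consequences. The theorem is a multi-letter to single-letter characterization whose overall structure parallels the Gray--Wyner result (Theorem 1), with the affine map of equations (1)--(2) converting the common and private rates into the inefficiencies $\Delta_{2a}$ and $\Delta_1$.

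For the converse, fix a protocol $(f_a^n, g_a^n)$ that $\epsilon$-achieves $(\{R_a\},R_0)$ and set $W \triangleq g_1^n(X_1^n, M_1)$. Fano's inequality applied to the agreement condition $\Pr(W_a \ne W_b) \le \epsilon$ lets us replace any $W_a$ by $W$ in the mutual-information expressions at the cost of a vanishing correction. Introduce a time-sharing RV $T \sim \text{Unif}\{1,\ldots,n\}$ independent of everything else and define $Q \triangleq (W, X_\mathcal{A}^{1:T-1}, T)$. A standard chain-rule argument exploiting the i.i.d.\ structure of $X_\mathcal{A}^n$ then gives $n R_a \ge H(M_a | X_a^n) \ge I(X_{\mathcal{A}\setminus a}^n; W | X_a^n) \ge n \, I(X_{\mathcal{A}\setminus a}; Q | X_a) = n\,\Delta_{2a}$ for every $a$. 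Similarly, the telescoping identity $H(X_\mathcal{A}^n | W) = n H(X_\mathcal{A}|Q)$ combined with $\frac{1}{n} H(X_a^n | W) \ge H(X_a|Q)$ and the $R_0$ constraint yields $n(R_0 + o(1)) \ge I(X_1^n;\ldots;X_K^n | W) \ge n\,\Delta_1$, so the \emph{same} $Q$ simultaneously delivers all $K+1$ single-letter bounds.

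Achievability for a fixed $p_{Q|X_\mathcal{A}}$ proceeds by Gray--Wyner-style random coding: generate $2^{n(I(X_\mathcal{A}; Q) + \delta)}$ i.i.d.\ codewords $Q^n(l)$ and have the genie locate an index $l^*$ jointly typical with $X_\mathcal{A}^n$ (covering lemma), then independently bin the codebook into $2^{n(\Delta_{2a} + \delta)}$ bins for each party $a$ and transmit the corresponding bin index as $M_a$. Party $a$ Slepian--Wolf-decodes $l^*$ from its bin using $X_a^n$ as side information, recovering a common $W_a = Q^n(l^*)$; the soft-covering lemma forces the induced joint pmf on $(X_\mathcal{A}^n, Q^n)$ close in total variation to $p_{X_\mathcal{A}}^{\otimes n}\, p_{Q|X_\mathcal{A}}^{\otimes n}$, so $\frac{1}{n}\,I(X_1^n;\ldots;X_K^n | Q^n)$ is within $O(\delta)$ of $\Delta_1$. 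The cardinality bound $|\mathcal{Q}| \le \prod_a |\mathcal{X}_a| + 2$ follows from a standard Carath\'eodory support-lemma argument applied to the $K+1$ functionals $\{\Delta_{2a}\}, \Delta_1$ with $p_{X_\mathcal{A}}$ preserved; convexity is a routine protocol time-sharing argument; closedness follows from the operational definition as $\epsilon \downarrow 0$; and continuity of the region is inherited from the continuity of the mutual-information functionals on the probability simplex together with the cardinality bound. The principal obstacle is the simultaneous single-letterization in the converse: the choice $Q = (W, X_\mathcal{A}^{1:T-1}, T)$ must encode enough ``past'' information to give the correct bound on the non-linear total-correlation term $R_0$ without inflating any of the private-rate bounds $R_a$, and verifying that this single $Q$ tightens all $K+1$ inequalities is the most delicate step of the proof.
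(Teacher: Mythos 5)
Your proposal follows essentially the same route as the paper's proof: the converse is the same chain-rule single-letterization with the time-shared auxiliary $Q=(W,\,\text{past symbols},\,J)$ yielding all $K+1$ bounds simultaneously, the achievability is the Gray--Wyner/Wyner--Ziv-style covering-and-binning scheme that the paper defers to the generalized lossy source coding argument of the $K=2$ case in [2], and cardinality, convexity, closure and continuity are handled by the same Carath\'eodory and compactness/time-sharing arguments. There is no substantive gap; if anything, you make explicit the agreement/Fano step (replacing each $W_a$ by a common $W$) that the paper's sketch leaves implicit by citing [2].
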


We sketch the proof of Theorem 2 in the Appendix. Corollary 3 follows from Theorem 2, (4), and (5) to yield the following expressions for the generalized G{\'a}cs-K\"{o}rner CI and Wyner CI in terms of the ARI region.
\begin{corollary}
\[\begin{aligned}
  {C_{GK}}({X_1};...;{X_K}) &= I({X_1};...;{X_K}) - \mathop {\min }\limits_{(0,...,0,{R_0}) \in \mathfrak{T}({X_\mathcal{A}})} {R_0}, \hfill \\
  {C_W}({X_1};...;{X_K}) &= I({X_1};...;{X_K}) + \mathop {\min }\limits_{\substack{({R_1},...,{R_K},0) \\ 
  \in \mathfrak{T}({X_\mathcal{A}})}} \sum\nolimits_{a = 1}^K {{R_a}} . \hfill \\ 
\end{aligned} \]
\end{corollary}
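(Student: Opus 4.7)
The plan is to read off both identities from the characterization of $\mathfrak{T}(X_{\mathcal{A}})$ supplied by Theorem 2 by restricting attention to the two coordinate-axis faces that appear in the minimizations, and then to invoke the two constrained optimizations already evaluated in equations (4) and (5).

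For the Gács-Körner identity, I first observe that a tuple $(0,\ldots,0,R_0)$ lies in $\mathfrak{T}(X_{\mathcal{A}})$ precisely when there exists $p_{Q|X_{\mathcal{A}}} \in \hat{\mathcal{P}}_{X_{\mathcal{A}}}$ satisfying $\Delta_{2a}(Q) = I(X_{\mathcal{A}\backslash a};Q|X_a) = 0$ for every $a \in \mathcal{A}$, together with $R_0 \geq \Delta_1(Q) = I(X_1;\ldots;X_K|Q)$. Vanishing of every $\Delta_{2a}$ is equivalent to the Markov chain $Q - X_a - X_{\mathcal{A}\backslash a}$ for each $a$, so the minimum $R_0$ on this axis is precisely $\min I(X_1;\ldots;X_K|Q)$ taken over conditional distributions obeying these Markov constraints. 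By (4) this equals $I(X_1;\ldots;X_K) - C_{GK}(X_1;\ldots;X_K)$, and rearranging delivers the first identity.

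The Wyner identity is the dual: restricting to $(R_1,\ldots,R_K,0) \in \mathfrak{T}(X_{\mathcal{A}})$ forces $\Delta_1(Q) = 0$, i.e. $X_i - Q - X_j$ for all $i \neq j$, while $R_a \geq \Delta_{2a}(Q)$. Minimizing $\sum_{a=1}^K R_a$ over feasible $Q$ then yields $\min \sum_{a=1}^K I(X_{\mathcal{A}\backslash a};Q|X_a)$ subject to pairwise conditional independence through $Q$, which by (5) equals $C_W(X_1;\ldots;X_K) - I(X_1;\ldots;X_K)$, and rearranging gives the second identity.

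The only subtleties I would flag are (i) that the two minima are actually attained (as opposed to being mere infima), which follows from the closedness of $\mathfrak{T}(X_{\mathcal{A}})$ asserted at the end of Theorem 2, and (ii) that the cardinality bound $|\mathcal{Q}| \leq \prod_{a=1}^K |\mathcal{X}_a| + 2$ built into $\hat{\mathcal{P}}_{X_{\mathcal{A}}}$ is large enough to accommodate the optimizing $Q$ in each of the two constrained problems, so that no restriction is lost when passing from the raw definitions of $C_{GK}$ and $C_W$ to the optimization over $\hat{\mathcal{P}}_{X_{\mathcal{A}}}$. Both items are essentially bookkeeping, and I do not anticipate any serious obstacle beyond invoking Theorem 2 and the two earlier displays.
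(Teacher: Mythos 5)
Your proposal is correct and follows essentially the same route as the paper, which derives the corollary directly from the single-letter characterization of $\mathfrak{T}(X_{\mathcal{A}})$ in Theorem 2 combined with the constrained optimizations already worked out in (4) and (5). The additional points you flag (attainment of the minima via closedness of $\mathfrak{T}$, and adequacy of the cardinality bound in $\hat{\mathcal{P}}_{X_{\mathcal{A}}}$) are sensible bookkeeping consistent with the paper's argument.
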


The following theorem (proven in the Appendix) gives the axes intercepts of the $(K+1)$-dimensional ARI region.
\begin{theorem}[Axes intercepts of the boundary of ${\mathfrak{T}}({X_{\mathcal{A}}})$]
\[\begin{gathered}
  \Delta _{2a}^{\operatorname{int} }({X_1}; \ldots ;{X_K}) \triangleq \min \left\{ {{R_a}:(0, \ldots ,{R_a}, \ldots ,0) \in \mathfrak{T}({X_{\mathcal{A}}})} \right\} \hfill \\
  {\text{            }} = \mathop {\min }\limits_{Q:{\text{ }}H(Q|{X_b}) = 0{\text{ }}\forall b{\text{ }} \in {\text{ }}{\mathcal{A}}\backslash a,{\text{ }}I({X_1}; \ldots ;{X_K}|Q) = 0} H(Q|{X_a}) \hfill \\
  \Delta _1^{\operatorname{int} }({X_1}; \ldots ;{X_K}) \triangleq \min \left\{ {{R_0}:(0, \ldots ,0,{R_0}) \in \mathfrak{T}({X_{\mathcal{A}}})} \right\} \hfill \\
  {\text{            }} = \mathop {\min }\limits_{Q:{\text{ }}H(Q|{X_a}) = 0{\text{ }}\forall a \in {\mathcal{A}}} I({X_1}; \ldots ;{X_K}|Q) \hfill \\ 
\end{gathered} \]
\end{theorem}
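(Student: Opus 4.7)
The plan is to derive both intercept formulas by starting from the representation of $\mathfrak{T}(X_{\mathcal{A}})$ in Theorem 2, substituting zeros in all but one coordinate of the test vector, and then using a G{\'a}cs-K{\"o}rner common-part argument to replace the auxiliary $Q$ by a deterministic function of the appropriate subset of $\{X_a\}$. By Theorem 2, the $R_0$-axis intercept reduces to minimizing $I(X_1;\ldots;X_K|Q)$ over $Q$'s satisfying $I(X_{\mathcal{A}\backslash a};Q|X_a)=0$ for every $a\in\mathcal{A}$, and the $R_a$-axis intercept reduces to minimizing $I(X_{\mathcal{A}\backslash a};Q|X_a)$ over $Q$'s satisfying $I(X_{\mathcal{A}\backslash b};Q|X_b)=0$ for every $b\neq a$ together with $I(X_1;\ldots;X_K|Q)=0$. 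The main task is then to show that these minimizations can be restricted without loss to $Q$'s of the deterministic form appearing in the claimed formulas.

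For $\Delta_1^{\operatorname{int}}$, I would first note that $I(X_{\mathcal{A}\backslash a};Q|X_a)=0$ for every $a$ is equivalent to the simultaneous Markov chains $Q-X_a-X_{\mathcal{A}\backslash a}$, and a standard common-part lemma then gives $Q\perp X_{\mathcal{A}}\mid F$, where $F$ denotes the G{\'a}cs-K{\"o}rner common function satisfying $H(F|X_a)=0$ for every $a$. A short chain-rule computation using $X_a\perp Q\mid F$ and $H(F|X_{\mathcal{A}})=0$ produces the identity
\[
I(X_1;\ldots;X_K|Q)-I(X_1;\ldots;X_K|F)=(K-1)\,H(F|Q)\geq 0,
\]
so replacing $Q$ by $F$ (or by any $Q'$ with $H(Q'|X_a)=0$ for all $a$, which are precisely the functions of $F$) never increases the objective, and the claimed characterization of $\Delta_1^{\operatorname{int}}$ follows.

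For $\Delta_{2a}^{\operatorname{int}}$, the same common-part argument applied to $\{X_b\}_{b\neq a}$ yields $Q\perp X_{\mathcal{A}}\mid F_{-a}$, where $F_{-a}$ is the G{\'a}cs-K{\"o}rner common function of the $X_b$'s with $b\neq a$. The key step, and the main obstacle, is to upgrade the constraint $I(X_1;\ldots;X_K|Q)=0$ to the identity $H(F_{-a}|Q)=0$: for $K\geq 3$ this follows because there exist distinct $b,c\neq a$ with $X_b\perp X_c\mid Q$ while $F_{-a}(X_b)=F_{-a}(X_c)$ almost surely, so two independent copies of the same random variable must be a.s.\ constant given $Q$; the $K=2$ case, where this ``two copies'' step degenerates, can be handled by a separate minimum-sufficient-statistic construction in the spirit of [2]. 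Once $H(F_{-a}|Q)=0$ is in hand, a chain-rule expansion using $Q\perp X_{\mathcal{A}}\mid F_{-a}$ and $H(F_{-a}|X_{\mathcal{A}\backslash a})=0$ delivers $I(X_{\mathcal{A}\backslash a};Q|X_a)=H(F_{-a}|X_a)$ for every feasible $Q$, which matches the value of $H(Q'|X_a)$ at the deterministic choice $Q'=F_{-a}$. The reverse direction is routine: on the smaller (deterministic) feasible set one has $H(Q'|X_{\mathcal{A}})=0$ and hence $I(X_{\mathcal{A}\backslash a};Q'|X_a)=H(Q'|X_a)$, so the two minimizations agree.
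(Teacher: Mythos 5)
Your proposal is essentially correct, but it takes a genuinely different route from the paper's. The paper proves the $\Delta_{2a}$ intercept by writing out the $K=3$ case directly from Theorem 2 and arguing the converse with an $I$-measure sketch: under the constraints, the $I$-measure of $Q$ is claimed to survive on a single atom (Fig.~2), so that $I(X_2X_3;Q|X_1)$ can be traded for $H(Q|X_1)$, with the other coordinates and general $K$ said to follow similarly and $K=2$ deferred to [2]; the $\Delta_1$ intercept is not argued separately. You instead work structurally with the G\'acs--K\"orner common part: a multivariate double-Markov (common-part) lemma converts the zero constraints into $Q\perp X_{\mathcal{A}}\mid F$ (resp.\ $F_{-a}$), the constraint $\Delta_1=0$ is upgraded to $H(F_{-a}|Q)=0$ by the observation that for $K\geq 3$ the variable $F_{-a}$ is a.s.\ a function of each of two variables $X_b,X_c$ ($b,c\neq a$) that are conditionally independent given $Q$, and exact identities such as $I(X_1;\ldots;X_K|Q)=I(X_1;\ldots;X_K|F)+(K-1)H(F|Q)$ and $I(X_{\mathcal{A}\backslash a};Q|X_a)=H(F_{-a}|X_a)$ (both of which check out) then pin down the two intercepts. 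What your route buys is rigor and uniformity in $K$: your identities are insensitive to slack in $Q$ --- note that the paper's intermediate claim that $I(X_3X_1;Q|X_2)=I(X_1X_2;Q|X_3)=0$ alone forces $H(Q|X_2)=H(Q|X_3)=0$ fails if one appends to $Q$ randomness independent of $X_{\mathcal{A}}$, although the inequality actually needed survives --- and your argument isolates exactly why $K\geq3$ is easy while $K=2$ needs the sufficient-statistic lemmas of [2]. What the paper's route buys is brevity, at the price of remaining a sketch.

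Two points you should still make explicit. First, the ``standard'' common-part lemma is standard only for two variables (double Markovity); the version you invoke, namely that $I(X_{\mathcal{A}\backslash b};Q|X_b)=0$ for all $b\neq a$ implies $Q\perp X_{\mathcal{A}}\mid F_{-a}$ with $F_{-a}$ the common part of $\{X_b\}_{b\neq a}$ alone (the conclusion covering $X_a$ as well), deserves a short proof: the constraints force $p_{Q|X_{\mathcal{A}}}=p_{Q|X_b}$ on the support for every $b\neq a$, and these conditionals are constant on the connected components that define $F_{-a}$; the same argument gives the all-variable statement used for $\Delta_1^{\operatorname{int}}$. Second, when you match the minimum at the deterministic choice $Q'=F_{-a}$ you implicitly use that $F_{-a}$ is feasible for the right-hand problem, i.e.\ $I(X_1;\ldots;X_K|F_{-a})=0$; this does follow from $H(F_{-a}|Q)=0$ together with $Q\perp X_{\mathcal{A}}\mid F_{-a}$ (the conditional law of $X_{\mathcal{A}}$ given $Q$ depends on $Q$ only through $F_{-a}$), but it should be stated, and it also takes care of the degenerate case in which no feasible $Q$ exists and both minimizations are vacuous.
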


\subsection{Monotone Regions for Secure K-party Sampling with Public Discussion}

We establish the monotonicity properties of $\mathfrak{T}$, which by virtue of being continuous and convex allows for deriving tight outer bounds on the rate of statistically secure sampling for the general $K$-party problem with setups. It is well-known that cryptographically useful non-trivial distributions cannot be securely realized from scratch, i.e., without the aid of an auxiliary setup of correlated randomness [3], [7]. Trusted pre-shared noisy correlations is a simple yet powerful cryptographic resource that takes no inputs from the parties, and generates samples of a given joint distribution, with party-$i$ given access to copies of the $i$-th variable. Access to such a setup is known to realize 2-party sampling [2], as well as other important primitives like bit commitment and oblivious transfer [1], [4] in an unconditionally secure way. In light of the resource character of noisy correlations in enabling such reductions (which are otherwise impossible to realize from scratch), abstracting and quantifying such resources is of interest. A resource is specified by a restriction, $\mathfrak{C}$ on the full set of realizable operations. Given $\mathfrak{C}$, states that cannot be created by means of $\mathfrak{C}$ naturally acquire some value and become a resource. When distant parties wish to securely sample RVs by manipulating a given joint distribution, it is natural to restrict attention to the class of LOPC operations. The resourcefulness or cryptographic content of the distribution is a nonlocal property that cannot increase under LOPC, and can be quantified using monotones. Monotones for secure $K$-party sampling are real-valued quantities that can never increase in any protocol that securely realizes a $K$-tuple of correlated RVs ${Y_{\mathcal{A}}}$ using a setup ${X_\mathcal{A}}$. As we shall see, the entire region $\mathfrak{T}$ is a monotone and $\mathfrak{T}({Y_{\mathcal{A}}})$ can be interpreted as a \emph{witness} of the cryptographically trivial nature of ${Y_{\mathcal{A}}}$: ${Y_{\mathcal{A}}}$ can be perfectly securely realized from scratch, iff $\mathfrak{T}({Y_{\mathcal{A}}})$ contains the origin. The closer ${Y_{\mathcal{A}}}$ is to the origin, the lesser cryptographic content it has. Conversely, the lesser $\mathfrak{T}({Y_{\mathcal{A}}})$ bulges towards the origin, the more cryptographic content it has.

Consider the following simplified description of the \emph{semi-honest} model for secure $K$-party sampling [3], [7]. 
A set of $K$ parties engage in an interactive (randomized) communication protocol $\Pi$ over a public discussion channel to accomplish the distributed approximate simulation of a prescribed joint distribution ${p_{{Y_\mathcal{A}}}}$. The parties have access to an auxiliary setup: independent copies of jointly distributed RVs ${X_\mathcal{A}}\sim {p_{X_\mathcal{A}}}$, with party-$a$ independently having access to copies of $X_a$ as well as an infinite stream of private randomness. The protocol proceeds in rounds, where in each round each party flips private coins, and based on the messages exchanged so far, sends a message over a broadcast public communication channel to all the other parties. At the end of the protocol, party-$a$ generates output ${\hat Y_a}$ as a function of its \emph{view} (encapsulated in the RV $V_a$), which consists of copies of its setup RV $X_a$, all the private coins flipped so far, and all the communication received over all the previous rounds. Interfering with the interaction is a semi-honest \emph{t-adversary} who may choose to ``passively corrupt'' a set $\mathcal{T}$ ($\subset\mathcal{A}$) of at most  $t$ ($< K$) parties, and learn their internal states. Compared to perfect reductions, statistical implementations are much more efficient [4]. The privacy and correctness requirements [3], [7] for statistically secure reductions can be stated as follows.

\begin{definition}
For $\epsilon$, $\delta \geq 0$, a protocol $\Pi$ is ($\delta$,t)-private if the information leakage of the final views of the corrupted parties $({V_\mathcal{T}})$ satisfies 
\begin{displaymath}
\sum\limits_{\mathcal{T} \subset \mathcal{A}:\lvert{\mathcal{T}}\rvert \leq t} {I({V_\mathcal{T}};{{\hat Y}_{\mathcal{A}\backslash \mathcal{T}}}|{{\hat Y}_\mathcal{T}}) \leq \delta}.
\end{displaymath}
The protocol is $\epsilon$-correct if $\mathsf{TV}({p_{{Y_\mathcal{A}}}},{p_{{{\hat Y}_\mathcal{A}}}})\leq\epsilon$. Perfect privacy and correctness correspond to $\delta=0$ and $\epsilon=0$, respectively.
\end{definition}

$(\delta,t)$-\emph{privacy} implies that any coalition of up to $t{\text{ }}( < K)$ parties who are honest but ``curious'' and leak their entire final views, learns nothing more about the non-coalition parties' outputs than what they can derive from their own set of outputs. 
As the views of the parties evolve along any LOPC protocol, the region of residual total dependency of the views can never shrink (away from the origin) [2].
Thus, if $\Pi$ securely realizes ${{\hat Y}_{\mathcal{A}}}$ using a setup ${X_\mathcal{A}}$, $\mathfrak{T}({X_{\mathcal{A}}})$ should be contained within $\mathfrak{T}({\hat Y}_{\mathcal{A}})$.
Definition 3 makes this precise.

\begin{definition}
Let $\mathcal{M}$ be a function that maps the $K$-tuple of RVs ${X_\mathcal{A}}$ to a subset of $\mathbb{R}_ + ^d$ s.t. if $a \in \mathcal{M}$ and $a' \geq a$, then $a' \in \mathcal{M}$. 
$\mathcal{M}$ is a monotone region if the following hold:

\begin{trivlist}
  \item \emph{1) Monotonicity under local operations (LO):} Suppose party-$i$ modifies $X_i$ to $Z$ by sending $X_i$ over a channel, characterized by $p_{Z|{X_i}}$. Then $\mathcal{M}$ cannot shrink, i.e., for all jointly distributed RVs $({X_\mathcal{A}},Z)$ with ${X_{\mathcal{A}\backslash i}} - {X_i} - Z$,
  $\mathcal{M}(X_1;\ldots;X_{i}Z;\ldots;X_K) \supseteq \mathcal{M}(X_1;\ldots;X_{i};\ldots;X_K)$.
   \item \emph{2) Monotonicity under public communication (PC):} Suppose party-$i$ publicly announces the value of $\widetilde X_i$. Then $\mathcal{M}$ cannot shrink, i.e., for all jointly distributed RVs $({X_\mathcal{A}},\widetilde X_i)$ with $H(\widetilde X_i|X_i)=0$, $\mathcal{M}(\widetilde X_iX_1;\ldots;\widetilde X_iX_{i-1};X_{i};\widetilde X_iX_{i+1};\ldots;\widetilde X_iX_K)$ $\supseteq$ $\mathcal{M}(X_1;\ldots;X_{i};\ldots;X_K)$.
  \item \emph{3) Monotonicity under statistically secure sampling:} Suppose, a subset $\mathcal{T}$ of the parties are ``passively corrupted'' who retain and share their views (encapsulated in the RV $V_{\mathcal{T}}$) in an attempt to infer additional information on the outputs of the non-coalition parties. 
W.l.o.g. let $\mathcal{T}=\{1,\ldots,m\}$, where $m \leq t$. For all jointly distributed RVs $({\hat Y_\mathcal{A}},{V_\mathcal{T}})$ and ${\delta _\mathcal{T}} \geq 0$, for each such $\mathcal{T}$ $(\subset\mathcal{A})$ if $I({V_\mathcal{T}};{\hat Y_{\mathcal{A}\backslash \mathcal{T}}}|{\hat Y_\mathcal{T}}) \leq {\delta _\mathcal{T}}$, 
then $\mathcal{M}({\hat Y_{1}};\ldots;{\hat Y_{m}};{\hat Y_{m+1}};\ldots;{\hat Y_{K}}) \supseteq \mathcal{M}({\hat Y_{1}}V_1;\ldots;{\hat Y_{m}}V_m;{\hat Y_{m+1}};\ldots;{\hat Y_{K}})+{\delta _\mathcal{T}}$, 
i.e., statistically securely sampled outputs do not have a much smaller region.
\item \emph{4) Additivity:} $\mathcal{M}$ supports coordinate-wise Minkowski addition for tensor products and is superadditive in general.
\item \emph{5) Continuity, Convexity and Closure:} $\mathcal{M}$ is a continuous function of the joint pmf $p_{X_\mathcal{A}}$. Also $\mathcal{M}$ is convex and closed.
\end{trivlist}
\end{definition}

\begin{theorem}
$\mathfrak{T}$ is a $(K+1)$-dimensional monotone region.
\end{theorem}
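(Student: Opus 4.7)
The plan is to verify each of the five properties of Definition 3 directly from the extremal characterization of $\mathfrak{T}(X_{\mathcal{A}})$ in Theorem 2, by exhibiting, for each monotonicity statement, a good choice of auxiliary witness $Q'$ for the target tuple built from a given witness $Q$ for the source tuple. Property 5 is already part of the conclusion of Theorem 2, so nothing remains to prove there.

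For Property 1 (LO monotonicity), I would simply reuse the witness $Q$ for $X_{\mathcal{A}}$ as a witness for $(X_1;\ldots;X_iZ;\ldots;X_K)$, relying on the induced Markov chain $Z - X_i - (X_{\mathcal{A}\backslash i}, Q)$. A short entropy calculation shows that $\Delta_1$ is preserved exactly, $\Delta_{2a}$ is preserved for $a \neq i$ (using $I(Z;Q|X_{\mathcal{A}}) = 0$), and $\Delta_{2i}$ weakly decreases (using $H(Q|X_iZ) \leq H(Q|X_i)$). For Property 2 (PC monotonicity) I would instead augment the witness to $Q' = (Q, \tilde X_i)$ and check, using $H(\tilde X_i|X_i)=0$, that $\Delta_1'$ drops by $\sum_{a \neq i} I(X_a;\tilde X_i|Q)$, each $\Delta_{2a}'$ for $a \neq i$ drops by $I(\tilde X_i;Q|X_a)$, and $\Delta_{2i}$ is unchanged. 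Both calculations give the required $\supseteq$ containments. For Property 4 (additivity) the natural witness on a tensor product is the independent pair $(Q, Q')$; independence of copies then makes $\Delta_1$ and each $\Delta_{2a}$ literally add across factors, yielding Minkowski addition on tensor products, and a standard single-letterization/convexification argument handles superadditivity in the general case.

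The main obstacle is Property 3 (statistical-security monotonicity), because of the $\delta_{\mathcal{T}}$ slack. In the perfect-privacy case $\delta_{\mathcal{T}}=0$ the hypothesis $I(V_{\mathcal{T}}; \hat Y_{\mathcal{A}\backslash \mathcal{T}}|\hat Y_{\mathcal{T}})=0$ yields the Markov chain $V_{\mathcal{T}} - \hat Y_{\mathcal{T}} - \hat Y_{\mathcal{A}\backslash \mathcal{T}}$, which lets one view the passage from $(\hat Y_1,\ldots,\hat Y_K)$ to $(\hat Y_1V_1,\ldots,\hat Y_mV_m,\hat Y_{m+1},\ldots,\hat Y_K)$ as a purely local randomized operation by the coalition on its own outputs; combining with Property 1 pins down equality of the two regions. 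For $\delta_{\mathcal{T}}>0$, I would plug a witness $Q'$ for the view-augmented tuple into the $\Delta$-expressions associated with $\hat Y_{\mathcal{A}}$ alone and track the residual conditional-mutual-information terms through chain-rule expansions, bounding them by $\delta_{\mathcal{T}}$; this slack is exactly what produces the additive $+\delta_{\mathcal{T}}$ shift in the containment. The delicate point I expect to wrestle with is allocating the single $\delta_{\mathcal{T}}$ budget consistently across all $K+1$ coordinates, so that the same shift simultaneously dominates $\Delta_1$ and each of the $\Delta_{2a}$.
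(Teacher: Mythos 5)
Your treatment of Properties 1, 2, 4 and 5 is essentially the paper's own argument: the same witness choices (reuse $Q$ for local operations, augment to $(Q,\widetilde X_i)$ for public communication, product witnesses for tensor products, and Theorem 2 for continuity, convexity and closure), and your drop computations for the PC step are correct and even sharper than what the containment requires. One small repair on the LO step: the bound on $\Delta_{2i}$ does not follow from ``$H(Q|X_iZ)\le H(Q|X_i)$''; the right reason is the Markov chain $Z - X_i - (X_{\mathcal{A}\backslash i},Q)$ induced by the joint pmf $p_{X_\mathcal{A}}p_{Z|X_i}p_{Q|X_\mathcal{A}}$, which in fact gives the equality $I(X_{\mathcal{A}\backslash i};Q|X_iZ) = I(X_{\mathcal{A}\backslash i};Q|X_i)$, exactly as in the paper.

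The genuine gap is Property 3, which is the only substantive part of the theorem and which you leave as a plan. First, your $\delta_{\mathcal{T}}=0$ shortcut via Property 1 does not work: Property 1 yields $\mathfrak{T}(\hat Y_1V_1;\ldots) \supseteq \mathfrak{T}(\hat Y_1;\ldots)$, i.e., the \emph{opposite} containment to the one Definition 3(3) demands, and moreover a joint coalition view $V_{\mathcal{T}}$ satisfying only $V_{\mathcal{T}} - \hat Y_{\mathcal{T}} - \hat Y_{\mathcal{A}\backslash\mathcal{T}}$ need not be producible by independent single-party channels $p_{V_i|\hat Y_i}$, so the passage to the view-augmented tuple is not a composition of the local operations that Property 1 covers. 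Second, for $\delta_{\mathcal{T}}>0$ you only announce the chain-rule bookkeeping and flag the allocation of $\delta_{\mathcal{T}}$ as an unresolved difficulty; this is precisely the step the paper writes out. Reusing the witness $Q$ of the augmented tuple: $\Delta_1$ can only decrease when the $V_i$ are dropped; for $j\notin\mathcal{T}$ one has $I(\hat Y_{\mathcal{A}\backslash j};Q|\hat Y_j) \le I(V_{\mathcal{T}}\hat Y_{\mathcal{A}\backslash j};Q|\hat Y_j)+\delta_{\mathcal{T}}$; and for $i\in\mathcal{T}$ the chain rule gives $I(V_{\mathcal{T}\backslash i}\hat Y_{\mathcal{A}\backslash i};Q|V_i\hat Y_i) \ge I(\hat Y_{\mathcal{A}\backslash\mathcal{T}};Q|V_{\mathcal{T}}\hat Y_{\mathcal{T}}) \ge I(\hat Y_{\mathcal{A}\backslash\mathcal{T}};Q|\hat Y_{\mathcal{T}}) - I(V_{\mathcal{T}};\hat Y_{\mathcal{A}\backslash\mathcal{T}}|\hat Y_{\mathcal{T}}) \ge I(\hat Y_{\mathcal{A}\backslash\mathcal{T}};Q|\hat Y_{\mathcal{T}})-\delta_{\mathcal{T}}$. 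In other words, the single privacy term $I(V_{\mathcal{T}};\hat Y_{\mathcal{A}\backslash\mathcal{T}}|\hat Y_{\mathcal{T}})\le\delta_{\mathcal{T}}$ supplies the entire slack in each coordinate simultaneously -- there is no budget to split across the $K+1$ coordinates. Until these coordinate-wise bounds are actually written down, Property 3, and hence the theorem, is not established by your proposal.
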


\begin{proof}
The following monotonicity inequality is useful: $I(X;Y|f(X)Z) \leq I(X;Y|Z)$.

\vspace{-.5\baselineskip}
\begin{trivlist}
  \item 1) For the joint pmf $p_{{X_\mathcal{A}}ZQ} = {p_{{X_\mathcal{A}}}}{p_{Z|{X_i}}}{p_{Q|{X_\mathcal{A}}}}$, monotonicity under LO holds since,
  \[\begin{aligned}
  {\Delta _{2i}}&:I({X_{\mathcal{A}\backslash i}};Q|{X_i}Z) = I({X_{\mathcal{A}\backslash i}};Q|{X_i}), \hfill \\
  \mathop {{\Delta _{2j}}}\limits_{j \ne i} &:I({X_{\mathcal{A}\backslash j}}Z;Q|{X_j}) = I({X_{\mathcal{A}\backslash j}};Q|{X_j}), \hfill \\
  \mathop {{\Delta _1}}\limits_{i = K} &:I({X_1}; \ldots ;{X_{K - 1}};{X_K}Z|Q)\mathop  = \limits^{{\text{(a)}}} I({X_1}; \ldots ;{X_K}|Q), \hfill \\ 
\end{aligned} \]
where (a) follows from choosing $i=K$ and using the recurrence relation $\Delta _1^K({X_1}; \ldots ;{X_K}|Q)$$\mathop=\Delta _1^{K - 1}({X_1};\ldots ;{X_{K - 1}}|Q)+I({X_K};{X_1} \ldots {X_{K - 1}}|Q)$. Since $\Delta _1$ is symmetric in all $X_i$'s, this holds for all parties.
  \item 2) For the joint pmf $p_{{X_\mathcal{A}}\widetilde X_iQ} = {p_{{X_\mathcal{A}}}}{p_{\widetilde X_i|{X_i}}}{p_{Q|{X_\mathcal{A}}}}$, monotonicity under PC holds since,
  \[\begin{aligned}
  {\Delta _{2i}}&:I({X_{\mathcal{A}\backslash i}}\widetilde X_i;Q\widetilde X_i|{X_i}) = I({X_{\mathcal{A}\backslash i}};Q|{X_i}), \hfill \\
  \mathop {{\Delta _{2j}}}\limits_{j \ne i} &:I({X_{\mathcal{A}\backslash j}}\widetilde X_i;Q\widetilde X_i|{X_j}\widetilde X_i) \leq I({X_{\mathcal{A}\backslash j}};Q|{X_j}), \hfill \\
  \mathop {{\Delta _1}}\limits_{i = 1} &:I({X_1};\widetilde X_1X_2; \ldots ;{\widetilde X_1X_K}|\widetilde X_1Q) \hfill \\
  &= {I({X_1};\widetilde X_1X_2|\widetilde X_1Q)}+\sum\limits_{j = 2}^{K - 1} {I(\widetilde X_1{X_1} \ldots {X_j};{\widetilde X_1X_{j + 1}}|\widetilde X_1Q)}  \hfill \\
  &\leq {I({X_1};X_2|Q)}+\sum\nolimits_{j = 2}^{K - 1} {I({X_1} \ldots {X_j};{X_{j + 1}}|Q)}  \hfill \\
  &= {I(X_1;\ldots;X_K|Q)}, \hfill \\ \end{aligned}\]
  where we have chosen $i=1$. Since $\Delta _1$ is a symmetric quantity, this holds for all $i$.
  \item 3) For any ${p_{Q|{{\hat Y}_\mathcal{A}}{V_\mathcal{T}}}} \in {\mathcal{P}_{{{\hat Y}_\mathcal{A}}{V_{\mathcal{T}}}}}$, monotonicity under statistically secure sampling easily holds for $\Delta_1$. For the coordinates ${{\{ {\Delta _{2i}}\} }_{i \in \mathcal{T}}}$, if $I({V_\mathcal{T}};{\hat Y_{\mathcal{A}\backslash \mathcal{T}}}|{\hat Y_\mathcal{T}}) \leq {\delta _\mathcal{T}}$, we have 
  \[\begin{aligned}
  \mathop {{\Delta _{2i}}}\limits_{i \in \mathcal{T}} &:I({V_{\mathcal{T}\backslash i}}{{\hat Y}_{\mathcal{A}\backslash i}};Q|{V_i}{{\hat Y}_i})=I({V_{\mathcal{T}\backslash i}}{{\hat Y}_{\mathcal{T}\backslash i}}{{\hat Y}_{\mathcal{A}\backslash {\mathcal{T}}}};Q|{V_i}{{\hat Y}_i}) \hfill \\
   &= I({{\hat Y}_{\mathcal{A}\backslash {\mathcal{T}}}};Q|{V_{\mathcal{T}}}{{\hat Y}_{\mathcal{T}}}) + I({V_{\mathcal{T}\backslash i}}{{\hat Y}_{\mathcal{T}\backslash i}};Q|{V_i}{{\hat Y}_i}) \hfill \\
   &\geq I({{\hat Y}_{\mathcal{A}\backslash {\mathcal{T}}}};Q|{V_{\mathcal{T}}}{{\hat Y}_{\mathcal{T}}}) \hfill \\
   &= I({{\hat Y}_{\mathcal{A}\backslash {\mathcal{T}}}};Q{V_{\mathcal{T}}}|{{\hat Y}_{\mathcal{T}}}) - I({V_{\mathcal{T}}};{{\hat Y}_{\mathcal{A}\backslash {\mathcal{T}}}}|{{\hat Y}_{\mathcal{T}}}) \hfill \\
   &\geq I({{\hat Y}_{\mathcal{A}\backslash {\mathcal{T}}}};Q|{{\hat Y}_{\mathcal{T}}}) - I({V_{\mathcal{T}}};{{\hat Y}_{\mathcal{A}\backslash {\mathcal{T}}}}|{{\hat Y}_{\mathcal{T}}}) \hfill \\
   &\Rightarrow I({{\hat Y}_{\mathcal{A}\backslash {\mathcal{T}}}};Q|{{\hat Y}_{\mathcal{T}}}) \leq I({V_{\mathcal{T}\backslash i}}{{\hat Y}_{\mathcal{A}\backslash i}};Q|{V_i}{{\hat Y}_i}) + {\delta _{\mathcal{T}}}. \hfill \\
\end{aligned}\]
For ${{{\{{\Delta _{2j}}\}}_{j \notin \mathcal{T}}}}$, $I({{\hat Y}_{\mathcal{A}\backslash j}};Q|{{\hat Y}_j})$ $\leq$ $I({V_{\mathcal{T}}}{{\hat Y}_{\mathcal{A}\backslash j}};Q|{{\hat Y}_j})$ $+$ ${\delta _{\mathcal{T}}}$.
\item 4) Additivity on tensor products and more generally superadditivity follows using arguments very similar to the ones for the $K=2$ case [2]. 
\item 5) Continuity and closure follow from Theorem 2. Convexity follows from arguments similar to the $K=2$ case (see Theorem 2.4 and 2.5 in [2]).   
\end{trivlist}
\end{proof}
\vspace{1mm}

Our generalization yields an interesting quantity (see Theorem 4), $\Delta _{1}^{\operatorname{int}}({{Y}_1}; \ldots ;{{Y}_K})$ which we call the \emph{residual total correlation}. Total correlation, $I({Y_1}; \ldots ;{Y_K})$ is a natural generalization of the mutual information in the multipartite case [5] that admits a simple operational interpretation: if parties in distant labs who share a noisy correlation $(p_{Y_{\mathcal{A}}})$ choose to \emph{forget} all correlations between them by locally processing $Y_i$ in their labs (e.g., sending $Y_i$ through a channel that completely randomizes it), then total correlation is the minimum increase of entropy of the local uncorrelated labs. Total correlation is a monotone [5] as is its residual counterpart. The latter follows from Theorem 4 and Theorem 5 since $\Delta _{1}^{\operatorname{int}}$ is the GK axis intercept of the boundary of $\mathfrak{T}({Y_{\mathcal{A}}})$ that measures the gap between total correlation and GK CI (see (4)). Condition (3) in Definition 3 implies (among other things), the following data processing inequality for $\Delta _{1}^{\operatorname{int}}$: the residual total correlation can never increase under any secure mapping from views to outputs. Analogous to the case for $K=2$ [1], we can state the following result for $t=1$, the weakest form of \emph{t-privacy}.

\vspace{.5\baselineskip}
\begin{proposition}
For all jointly distributed RVs $({Y_\mathcal{A}},{V_\mathcal{A}})$, if ${V_i}-{ Y_i}-{ Y_{\mathcal{A}\backslash i}}$, then $\Delta _{1}^{\operatorname{int}}({{Y}_1}; \ldots ;{{Y}_K})$ $\leq$ $\Delta _{1}^{\operatorname{int}}({V_1{Y}_1}; \ldots ;V_K{{Y}_K})$.
\end{proposition}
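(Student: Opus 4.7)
The plan is to exhibit a minimizer $Q^*$ of the right-hand side as a valid ``witness'' for the left-hand side, using the fact that $Q^*$ inherits a useful double Markov structure from the hypothesis. Pick $Q^*$ achieving $\Delta_1^{\operatorname{int}}(V_1Y_1;\ldots;V_KY_K) = I(V_1Y_1;\ldots;V_KY_K|Q^*)$, so that $H(Q^*|V_aY_a)=0$ for every $a$.

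The key observation (Step~1) is that $Q^* - Y_a - Y_{\mathcal{A}\backslash a}$ for every $a$. Indeed, $Q^*$ is a deterministic function of $(V_a,Y_a)$, so conditioning on $Y_a$ makes $Q^*$ a function of $V_a$ alone; combined with the hypothesis $V_a - Y_a - Y_{\mathcal{A}\backslash a}$ this forces $Q^* \perp Y_{\mathcal{A}\backslash a}\mid Y_a$. Under this double Markov condition $I(Y_a;Q^*) = I(Y_{\mathcal{A}};Q^*)$ for every $a$, so expanding the conditional total correlation yields
\begin{equation*}
I(Y_1;\ldots;Y_K|Q^*) = I(Y_1;\ldots;Y_K) - (K-1)\,I(Y_{\mathcal{A}};Q^*).
\end{equation*}
The second characterisation of GK CI from Section~II gives $I(Y_{\mathcal{A}};Q^*) \leq C_{GK}(Y_1;\ldots;Y_K)$, and the identical expansion applied to the minimum in Theorem~4 identifies $\Delta_1^{\operatorname{int}}(Y_{\mathcal{A}}) = I(Y_1;\ldots;Y_K) - (K-1)\,C_{GK}(Y_1;\ldots;Y_K)$; hence $I(Y_1;\ldots;Y_K|Q^*) \geq \Delta_1^{\operatorname{int}}(Y_{\mathcal{A}})$.

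Step~2 says that removing the $V$'s only shrinks the conditional total correlation. By a chain-rule expansion,
\begin{align*}
&I(V_1Y_1;\ldots;V_KY_K|Q^*) - I(Y_1;\ldots;Y_K|Q^*) \\
&\quad = \sum_{a} I(V_a;Y_{\mathcal{A}\backslash a}|Y_aQ^*) + I(V_1;\ldots;V_K|Y_{\mathcal{A}}Q^*) \geq 0,
\end{align*}
so $I(V_1Y_1;\ldots;V_KY_K|Q^*) \geq I(Y_1;\ldots;Y_K|Q^*)$. Chaining with Step~1 gives $\Delta_1^{\operatorname{int}}(V_1Y_1;\ldots;V_KY_K) = I(V_1Y_1;\ldots;V_KY_K|Q^*) \geq I(Y_1;\ldots;Y_K|Q^*) \geq \Delta_1^{\operatorname{int}}(Y_{\mathcal{A}})$, proving the proposition.

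The main subtlety is Step~1: the per-party hypothesis $V_a - Y_a - Y_{\mathcal{A}\backslash a}$ places no constraint on the joint of the $V$'s across parties, so LO-monotonicity of Theorem~5 does not apply directly; the argument instead relies on the observation that any minimizer $Q^*$ of the $(V_aY_a)$-problem automatically becomes a legitimate double-Markov mediator for $Y_{\mathcal{A}}$, enabling a direct comparison with $\Delta_1^{\operatorname{int}}(Y_{\mathcal{A}})$. The remaining steps are routine chain-rule bookkeeping together with the standard upper bound $I(Y_{\mathcal{A}};Q^*) \leq C_{GK}(Y_{\mathcal{A}})$ for double-Markov mediators.
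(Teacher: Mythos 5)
The paper gives no written-out proof of Proposition 6 (it is asserted as a consequence of the monotone-region machinery, ``analogous to the case for $K=2$ [1]''), so your argument has to stand on its own, and it essentially does. Step~1's key observation is correct: if $H(Q^*|V_aY_a)=0$, then $I(Q^*;Y_{\mathcal{A}\backslash a}|Y_a)\leq I(V_a;Y_{\mathcal{A}\backslash a}|Y_a)+I(Q^*;Y_{\mathcal{A}\backslash a}|V_aY_a)=0$, so $Q^*-Y_a-Y_{\mathcal{A}\backslash a}$ for every $a$. Step~2's display is an exact chain-rule identity (both sides equal $\sum_a H(V_a|Y_aQ^*)-H(V_{\mathcal{A}}|Y_{\mathcal{A}}Q^*)$), so $I(V_1Y_1;\ldots;V_KY_K|Q^*)\geq I(Y_1;\ldots;Y_K|Q^*)$ holds. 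You are also right that condition (3) of Definition 3 / Theorem 5 cannot simply be iterated party-by-party, since the hypothesis constrains each $V_a$ only through its own marginal Markov chain and says nothing about the joint law of the $V_a$'s; a direct argument of the kind you give is indeed needed.

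The one weak link is the passage from $I(Y_1;\ldots;Y_K|Q^*)$ down to $\Delta_1^{\operatorname{int}}(Y_{\mathcal{A}})$. You justify $I(Y_{\mathcal{A}};Q^*)\leq C_{GK}(Y_1;\ldots;Y_K)$ by ``the second characterisation of GK CI from Section II,'' but that display is only the bivariate statement (double Markovity); for $K\geq 3$ the assertion that every multi-Markov mediator satisfies $I(Y_{\mathcal{A}};Q)\leq$ (entropy of the common part of \emph{all} $K$ variables) is precisely the nontrivial Markov-to-deterministic content behind eq.~(4) and Theorem 4, and cannot be quoted from the $K=2$ display. (It is true -- e.g., by induction on $K$ via the double Markov lemma: the pair $(Y_1,Y_2)$ yields a common part $F_{12}$ with $Q^*-F_{12}-Y_{\mathcal{A}}$, and one recurses on $(F_{12},Y_3,\ldots,Y_K)$ -- but it needs either that argument or an explicit appeal to the paper's Theorem 4.) Relatedly, your identity $\Delta_1^{\operatorname{int}}=I(Y_1;\ldots;Y_K)-(K-1)C_{GK}$ uses the common-part-entropy normalization and clashes with eq.~(4) as printed. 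The simplest repair avoids $C_{GK}$ entirely: Step~1 shows $\Delta_{2a}(Q^*)=I(Y_{\mathcal{A}\backslash a};Q^*|Y_a)=0$ for all $a$, so by Theorem 2 the point $(0,\ldots,0,I(Y_1;\ldots;Y_K|Q^*))$ lies in $\mathfrak{T}(Y_{\mathcal{A}})$ (the cardinality bound in $\hat{\mathcal{P}}_{Y_{\mathcal{A}}}$ is immaterial, since any finite $Q$ meeting the constraints gives a point of the region through the achievability part), and then $\Delta_1^{\operatorname{int}}(Y_{\mathcal{A}})\leq I(Y_1;\ldots;Y_K|Q^*)$ is immediate from the definition of the intercept in Theorem 4. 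With that substitution, chaining with your Step~2 completes the proof exactly as you outline.
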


The most important consequence of Theorem 5 is that $\mathfrak{T}$ can be used to derive the impossibility of sampling ${Y_{\mathcal{A}}}$ from ${X_{\mathcal{A}}}$ with $\epsilon$\emph{-correctness} and ($\delta$,\emph{t})-\emph{privacy}---unless and until $\mathfrak{T}(X_{\mathcal{A}})$ $\subseteq$ $\mathfrak{T}({Y_{\mathcal{A}}})$, such reductions are impossible. Furthermore, by virtue of the continuity and convexity of $\mathfrak{T}$, one can derive an upper bound on the rate of such reductions. We prove a milder version of the above statement in Corollary 7. An analogous statement for the rate requires invoking arguments related to the convexity of the monotone region which we skip. The details are similar to the argument in [2].
\vspace{.5\baselineskip}
\begin{corollary}
If $m$ i.i.d copies of $Y_{\mathcal{A}}$ can be statistically securely realized from $n$ i.i.d copies of $X_{\mathcal{A}}$, then $n\mathfrak{T}(X_{\mathcal{A}}) \subseteq m\mathfrak{T}({Y_{\mathcal{A}}})$, (where multiplication by $n$ refers to $n$-times repeated Minkowski sum).
\end{corollary}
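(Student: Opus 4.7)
The plan is to propagate the five monotonicity properties of $\mathfrak{T}$ established in Theorem 5 along the execution of the hypothesized secure sampling protocol $\Pi$, obtaining a chain of inclusions
\[
n\mathfrak{T}(X_{\mathcal{A}}) \;\subseteq\; \mathfrak{T}(V_{\mathcal{A}}) \;\subseteq\; \mathfrak{T}(\hat Y_{\mathcal{A}}) \;\subseteq\; m\mathfrak{T}(Y_{\mathcal{A}}),
\]
where $V_i$ is the final view of party $i$ and $\hat Y_i$ its output. The leftmost equality/inclusion comes from Property~4 (additivity on tensor products) applied to the $n$ i.i.d.\ copies of the setup: $\mathfrak{T}(X_{\mathcal{A}}^n) = n\mathfrak{T}(X_{\mathcal{A}})$.

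For the first inclusion, I would view $\Pi$ as a finite round-by-round alternation of local randomized operations (private coin flips and deterministic processing of a party's current view) and public broadcasts (deterministic functions of a party's view sent over the noiseless public channel). Property~1 handles the former and Property~2 the latter, each elementary step only enlarging the region; an induction over the rounds of $\Pi$ yields $\mathfrak{T}(V_{\mathcal{A}}) \supseteq \mathfrak{T}(X_{\mathcal{A}}^n)$.

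For the middle inclusion, observe that $\hat Y_i = g_i^n(V_i)$ is a deterministic function of $V_i$, so $\mathfrak{T}(V_{\mathcal{A}}) = \mathfrak{T}(V_1\hat Y_1;\ldots;V_K\hat Y_K)$. The $(\delta,t)$-privacy hypothesis gives $I(V_{\mathcal{T}};\hat Y_{\mathcal{A}\backslash\mathcal{T}}|\hat Y_{\mathcal{T}}) \leq \delta$ for every coalition $\mathcal{T}$ with $|\mathcal{T}|\leq t$, which is exactly the premise of Property~3; applied coordinate-wise across an appropriate family of coalitions (each replacing a batch of $V_{\mathcal{T}}$'s by $\hat Y_{\mathcal{T}}$'s) this yields $\mathfrak{T}(\hat Y_{\mathcal{A}}) + O(\delta) \supseteq \mathfrak{T}(V_{\mathcal{A}})$. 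For the rightmost inclusion, $\epsilon$-correctness gives $\mathsf{TV}(p_{\hat Y_{\mathcal{A}}},p_{Y_{\mathcal{A}}^m}) \leq \epsilon$; combining Property~4 applied to $Y_{\mathcal{A}}^m$ with Property~5 (continuity and closure) gives $\mathfrak{T}(\hat Y_{\mathcal{A}}) \to m\mathfrak{T}(Y_{\mathcal{A}})$ in the perfect limit, closing the chain.

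The main obstacle is the middle inclusion: Property~3 only permits swapping the views of a coalition of size at most $t < K$ for their outputs, so eliminating \emph{all} $V_i$'s simultaneously requires either iterating the replacement across a sequence of overlapping coalitions, carefully bookkeeping the accumulated $\delta$-shifts on each coordinate of $\mathbb{R}_+^{K+1}$, or invoking the standard semi-honest reduction that a party's final view is simulatable from its output up to the privacy slack. Since Corollary~7 is explicitly labelled ``a milder version'' of the statistical rate statement, I would first establish the case of perfect security and correctness ($\delta=\epsilon=0$), where all $\supseteq$'s become clean set inclusions and both additivity identities hold exactly, and then recover the statistical case by taking $\epsilon,\delta \to 0$ using continuity and closure of $\mathfrak{T}$, mirroring the $K=2$ argument in [2] to which the paper explicitly defers.
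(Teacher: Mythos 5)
Your proposal follows essentially the same route as the paper's own sketch: set $V_{\mathcal{A}}^0 = X_{\mathcal{A}}^n$, apply Conditions (1) and (2) of Definition 3 round by round to get $\mathfrak{T}(V_{\mathcal{A}}) \supseteq \mathfrak{T}(X_{\mathcal{A}}^n)$, invoke Condition (3) to pass from the final views to $\mathfrak{T}(Y_{\mathcal{A}}^m)$, and finish with Condition (4) (additivity on tensor products) to convert $\mathfrak{T}(Y_{\mathcal{A}}^m) \supseteq \mathfrak{T}(X_{\mathcal{A}}^n)$ into $n\mathfrak{T}(X_{\mathcal{A}}) \subseteq m\mathfrak{T}(Y_{\mathcal{A}})$. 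Your extra bookkeeping of the $\delta$- and $\epsilon$-slack via continuity and closure only makes explicit what the paper's sketch leaves implicit, so the argument is correct and matches the paper's.
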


\begin{proof}[Proof (sketch)]
Let the RV $V_{\mathcal{A}}^r$ encapsulate the view of the parties at the end of round $r$. Let $V_{\mathcal{A}}^0 = X_{\mathcal{A}}^n$ and let the final view be $V_{\mathcal{A}}$. Then the proof follows from Theorem 5 by noting the following. By Condition (1) and (2) of Definition 3, ${\mathfrak{T}}(V_{\mathcal{A}}^r) \supseteq {\mathfrak{T}}(V_{\mathcal{A}}^{r - 1})$. By Condition (3), ${\mathfrak{T}}(Y_{\mathcal{A}}^m) \supseteq {\mathfrak{T}}({V_{\mathcal{A}}})$. Thus, ${\mathfrak{T}}(Y_{\mathcal{A}}^m) \supseteq {\mathfrak{T}}(X_{\mathcal{A}}^n)$. Finally, by Condition (4), the required inclusion holds.
\end{proof}

Given ${p_{Q|{Y_{\mathcal{A}}}}} \in {{\mathcal{P}}_{{Y_{\mathcal{A}}}}}$, the set of all \emph{t-private} distributions that can be sampled from scratch with perfect correctness and privacy, are characterized by the following conditions:
\begin{align*}
  &{\Delta _{2i}} = I({Y_{\mathcal{A}\backslash i}};Q|{Y_i}) = 0,{\text{ }}\forall i \in \mathcal{A} \hfill \tag{6}\\
  &{\Delta _1} = I({Y_1}; \ldots; {Y_K}|Q) = \sum\nolimits_{i = 1}^{K - 1} {I({Y_1} \ldots {Y_i};{Y_{i + 1}}|Q)}  = 0 \tag{7}\hfill
\end{align*}
\emph{t-privacy} follows from (6), (7) since  ${\Delta _{2i}} = I({Y_{\mathcal{A}\backslash i}};Q|{Y_i}) = 0,{\text{ }}\forall i \in \mathcal{A}$ $\Rightarrow$ $I({Y_{{\mathcal{A}\backslash \mathcal{T}}}};Q|{Y_{\mathcal{T}}}) = 0,{\text{ }}\forall {\mathcal{T}} \subset {\mathcal{A}},{\text{ }}\lvert{\mathcal{T}}\rvert \leq t$, and ${\Delta _1} = \sum\nolimits_{i = 1}^{K - 1} {I({Y_1} \ldots {Y_i};{Y_{i + 1}}|Q)} = 0$ $\Rightarrow$  $I({Y_{{\mathcal{A}\backslash }i}};{Y_i}|Q) = 0,{\text{ }}\forall a \in {\mathcal{A}}$ $\Rightarrow$ $I({Y_{{\mathcal{A}\backslash \mathcal{T}}}};{Y_{\mathcal{T}}}|Q) = 0,{\text{ }}\forall {\mathcal{T}} \subset {\mathcal{A}},{\text{ }}\lvert{\mathcal{T}}\rvert \leq t$.

A $2K$-dimensional characterization for the $K$-variate monotone region, ${\mathfrak{T}}^{\mathsf{2K}}$ was given in [3] (see Theorem 3 in [3]), by further decomposing the residual total dependency, $\Delta_1$ into $K$ components, viz.,
\begin{displaymath}
{\mathfrak{T}}^{\mathsf{2K}}({Y_{\mathcal{A}}}) = \begin{cases}
(\{{R_{i_1}}\} _{i_1 = 1}^K,\{{R_{i_2}}\} _{i_2 = 1}^K):\exists {p_{Q|{X_\mathcal{A}}}} \text{ s.t.} \forall i \in \mathcal{A},\\
{R_{i_1}} \geq I({Y_{\mathcal{A}\backslash i}};{Y_i}|Q), {R_{i_2}} \geq I({Y_{\mathcal{A}\backslash i}};Q|{Y_i}). \\
\end{cases}
\end{displaymath}
For independent setups, both ${\mathfrak{T}}^{\mathsf{2K}}$ and ${\mathfrak{T}}$ yield the same characterization of the \emph{t-private} distributions realizable from scratch. With non-trivial setups $(K > 2)$, ${\mathfrak{T}}$ can give strictly tighter bounds (than ${\mathfrak{T}}^{\mathsf{2K}}$) on the rates of secure $K$-party protocols. This follows from noting that whenever the common core $Q$ fails to completely resolve the dependence between $Y_{\mathcal{A}}$, any decomposition of $\Delta_1$ of the form $\sum\nolimits_{i = 1}^K {I({Y_{{\mathcal{A}}\backslash i}};{Y_i}|Q)}$ is bound to induce some redundant mutual information terms. Theorem 8 gives sufficient conditions for the statistical case.

\vspace{.5\baselineskip}
\begin{theorem}
A $K$-tuple of RVs ${Y}_{\mathcal{A}}\sim{p_{{Y}_{\mathcal{A}}}}$ can be sampled from scratch with $\epsilon$-correctness and ($\delta$,t)-privacy, if there exists a RV $Q$, jointly distributed with ${\hat Y}_{\mathcal{A}}$ s.t. the following hold:
\begin{align*}
\mathsf{TV}(p_{{Y_\mathcal{A}}},p_{{{\hat Y}_\mathcal{A}}}) \leq \epsilon \tag{8}\\
\sum\limits_{\mathcal{T} \subset \mathcal{A}:\lvert{\mathcal{T}}\rvert \leq t} {{I({\hat Y}_{\mathcal{A}\backslash \mathcal{T}};Q|{\hat Y}_\mathcal{T})}  \leq \delta} \tag{9} \\
I({\hat Y}_{{\mathcal{A}}\backslash i};{\hat Y}_i|Q) = 0, \hspace{2mm}\forall a \in \mathcal{A} \tag{10}
\end{align*}
\end{theorem}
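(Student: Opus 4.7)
The plan is to exhibit a direct, one-round protocol and verify that it satisfies the two requirements of Definition 2. The construction is:

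\textbf{Protocol.} Designate party $1$. Using its private coins, party $1$ samples a single realization $Q\sim p_Q$ (the marginal induced by the $Q$ in the hypothesis) and broadcasts $Q$ over the public discussion channel. Every party $a\in\mathcal{A}$ (including party $1$) then uses its own fresh private coins to locally sample $\hat Y_a\sim p_{\hat Y_a\mid Q}$, and outputs $\hat Y_a$. Since $Q$ is public after the first round, each party can perform its local sampling without any further communication.

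\textbf{Correctness.} The key observation is that condition~(10), when imposed for every $a\in\mathcal{A}$, is equivalent to conditional mutual independence of $\hat Y_1,\ldots,\hat Y_K$ given $Q$; i.e.\ $p_{\hat Y_\mathcal{A}\mid Q}=\prod_{a\in\mathcal{A}} p_{\hat Y_a\mid Q}$. Since the parties use independent private coins and sample according to $p_{\hat Y_a\mid Q}$, the induced joint distribution of the outputs is exactly $p_{\hat Y_\mathcal{A}}$. Hypothesis~(8) then immediately gives $\mathsf{TV}(p_{Y_\mathcal{A}},p_{\hat Y_\mathcal{A}})\leq\epsilon$, so the protocol is $\epsilon$-correct.

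\textbf{Privacy.} Fix any coalition $\mathcal{T}\subset\mathcal{A}$ with $|\mathcal{T}|\leq t$. The view $V_\mathcal{T}$ is a deterministic function of $Q$, the private coins $R_\mathcal{T}$ of the parties in $\mathcal{T}$, and (through those coins together with $Q$) the outputs $\hat Y_\mathcal{T}$. Because the coins $R_\mathcal{T}$ are generated independently of everything else, and because the conditional independence from (10) yields $\hat Y_{\mathcal{A}\backslash\mathcal{T}} - (Q,\hat Y_\mathcal{T}) - (R_\mathcal{T})$, a short Markov-chain calculation gives
\[
I(V_\mathcal{T};\hat Y_{\mathcal{A}\backslash\mathcal{T}}\mid\hat Y_\mathcal{T})\;=\;I(Q;\hat Y_{\mathcal{A}\backslash\mathcal{T}}\mid\hat Y_\mathcal{T}).
\]
Summing over all $\mathcal{T}\subset\mathcal{A}$ with $|\mathcal{T}|\leq t$ and invoking (9) yields the $(\delta,t)$-privacy bound of Definition~2.

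\textbf{Expected obstacle.} The construction itself is elementary; the only delicate point is justifying the reduction $I(V_\mathcal{T};\hat Y_{\mathcal{A}\backslash\mathcal{T}}\mid\hat Y_\mathcal{T})=I(Q;\hat Y_{\mathcal{A}\backslash\mathcal{T}}\mid\hat Y_\mathcal{T})$, i.e.\ that the private coins of the coalition (and, when $1\in\mathcal{T}$, the coins used to draw $Q$) contribute no additional leakage beyond $Q$ itself. This follows cleanly once one writes the joint factorization $p_{Q}\,p_{R_\mathcal{T}}\,\prod_{a}p_{\hat Y_a\mid Q}$ implied by the protocol together with condition~(10), and then applies the chain rule for mutual information.
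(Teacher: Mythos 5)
Your proposal is correct and matches the paper's own argument: the same one-round protocol (broadcast $Q$, then each party locally samples $\hat Y_a\sim p_{\hat Y_a\mid Q}$ with private coins), with (10) giving conditional independence and hence $\epsilon$-correctness via (8), and privacy reduced to $I(Q;\hat Y_{\mathcal{A}\backslash\mathcal{T}}\mid\hat Y_\mathcal{T})$ and bounded by (9) exactly as in the paper, where the identity $I(V_\mathcal{T};\hat Y_{\mathcal{A}\backslash\mathcal{T}}\mid\hat Y_\mathcal{T})=I(Q;\hat Y_{\mathcal{A}\backslash\mathcal{T}}\mid\hat Y_\mathcal{T})$ is established through the same factorization you describe (the paper phrases it via $U_\mathcal{T}$ with $I(\hat Y_{\mathcal{A}\backslash\mathcal{T}};U_\mathcal{T}\mid Q)=0$). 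The only cosmetic difference is that the paper lets the designated party sample $(\hat Y_i,Q)$ jointly rather than $Q$ first and then $\hat Y_i\sim p_{\hat Y_i\mid Q}$, which induces the same joint law.
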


\begin{proof}
Consider the following protocol $\Pi _S$ satisfying conditions (8)--(10). Party-$i$ samples $U_i=({\hat Y}_i, Q)$ and publicly discloses the value of $Q$, following which, each ${\{ {\text{party-}}j\} _{j \in {\mathcal{A}}\backslash i}}$ independently samples ${U}_j$ by flipping their private coins using $p_{{\hat Y_j}|Q}$ conditioned on the received $Q$. Then, from (10) it follows that ${\hat Y}_j$ are independent given $Q$ which implies ${\hat Y}_{\mathcal{A}}\sim{p_{{\hat Y}_{\mathcal{A}}}}$. Then, given (8), $\epsilon$-correctness follows. 

To show ($\delta$,\emph{t})-privacy, first note that $H({{\hat Y}_{{\mathcal{A}}\backslash {\mathcal{T}}}}|{{\hat Y}_{\mathcal{T}}}Q)$ $\mathop  = \limits^{{\text{(a)}}}$ $H({{\hat Y}_{{\mathcal{A}}\backslash {\mathcal{T}}}}|Q)$ $\mathop  = \limits^{{\text{(b)}}}$ $H({{\hat Y}_{{\mathcal{A}}\backslash {\mathcal{T}}}}|QU_{\mathcal{T}})$ $\mathop  = \limits^{{\text{(c)}}}$ $H({{\hat Y}_{{\mathcal{A}}\backslash {\mathcal{T}}}}|QU_{\mathcal{T}}{{\hat Y}_{\mathcal{T}}})$, where (a) follows from (10),
(b) follows from noting that $I({{\hat Y}_{{\mathcal{A}}\backslash {\mathcal{T}}}};U_{\mathcal{T}}|Q)=0$, and
(c) follows since ${{\hat Y}_{\mathcal{T}}}$ is a deterministic function of $(U_{\mathcal{T}},Q)$. Then
\[\begin{aligned}
I({{\hat Y}_{{\mathcal{A}}\backslash {\mathcal{T}}}};Q|{{\hat Y}_{\mathcal{T}}}) &= 
 H({{\hat Y}_{{\mathcal{A}}\backslash {\mathcal{T}}}}|{{\hat Y}_{\mathcal{T}}}) - H({{\hat Y}_{{\mathcal{A}}\backslash {\mathcal{T}}}}|{{\hat Y}_{\mathcal{T}}}Q) \\
 &=H({{\hat Y}_{{\mathcal{A}}\backslash {\mathcal{T}}}}|{{\hat Y}_{\mathcal{T}}}) - H({{\hat Y}_{{\mathcal{A}}\backslash {\mathcal{T}}}}|QU_{\mathcal{T}}{{\hat Y}_{\mathcal{T}}}) \\
 &=I({{\hat Y}_{{\mathcal{A}}\backslash {\mathcal{T}}}};Q U_{\mathcal{T}}|{{\hat Y}_{\mathcal{T}}})\\
 &\mathop  = \limits^{{\text{(d)}}}{I({{\hat Y}_{\mathcal{A}\backslash \mathcal{T}}};{V_\mathcal{T}}|{{\hat Y}_\mathcal{T}}) \mathop \leq \limits^{{\text{(e)}}} \delta},
\end{aligned} \]
where (d) follows since the view ${V_\mathcal{T}}$ comprises of private randomness $U_{\mathcal{T}}$ and $Q$, the sole message broadcast by party-$i$ at the start of the protocol, and (e) follows from (9). Then from Definition 2, it follows that $\Pi _S$ is ($\delta$,\emph{t})-private. 
\end{proof}

In [13], monotone region for a channel-type model ($K=2$) was defined under a restriction to the ${\Delta _{21}}=0$ plane to derive upper bounds on the oblivious transfer capacity. Equivalent generalizations for multiuser channels using pairwise setups are of interest. Another observation of independent interest is that recently, the Hypercontractivity (HC) ribbon, a tensorizing measure of correlation [14], was derived as a dual of the GW region [15]. Both the HC ribbon and ARI region behave monotonically under local stochastic evolution and are measures of nonlocal correlation. We leave as an open question as to how these regions might be related.

\vspace{-1mm}
\section*{Acknowledgment}
\vspace{1mm}
The author wishes to thank Paul Cuff and Manoj Prabhakaran for short useful discussions over email, as well as anonymous reviewers for their valuable comments.

\vspace{-1mm}

\begin{IEEEbiography}[{\includegraphics[width=1in,height=1.25in,clip,keepaspectratio]{picture}}]{John Doe}
\blindtext
\end{IEEEbiography}

\vspace{-3mm}
\appendix
\vspace{1mm}
\begin{proof}[Proof (sketch) for Theorem 2]
The proof for achievability which is based on a generalized lossy source coding problem (for $K$ variables) follows similar lines as in [2] and is omitted in the interest of space. The converse follows by minor modifications from the $K=2$ case [2] and is provided here for completeness.
\[\begin{aligned}
  n({R_a} + \epsilon) &\geq H({M_a}) \geq H({M_a}|X_a^n) \geq H({W_a}|X_a^n) \hfill \\
   &\geq I(Y_a^n;{W_a}|X_a^n),   \hspace{3mm} {Y_a} \triangleq X_{{\mathcal{A}}\backslash a}^{} \hfill \\
  &\mathop  =  \limits^{{\text{(a)}}} \sum\nolimits_{i = 1}^n {H(} Y_{ai}^{}|X_{ai}^{}) - H(Y_{ai}^{}|{W_a}Y_{ai}^{i - 1}X_a^n) \hfill \\
  &\geq \sum\nolimits_{i = 1}^n {H(} Y_{ai}^{}|X_{ai}^{}) - H(Y_{ai}^{}|{W_a}Y_{ai}^{i - 1}X_{ai}^{}X_a^{i - 1}) \hfill \\
   &= \sum\nolimits_{i = 1}^n {I(} Y_{ai}^{};{Q_i}|X_{ai}^{}), \hspace{3mm} {Q_i} \triangleq {W_a}X_{k + 1}^{i - 1} \ldots X_1^{i - 1} \hfill \\
  &\mathop  =  \limits^{{\text{(b)}}} \hspace{2mm} nI({Y_{aJ}};{Q_J}|{X_{aJ}}J), \hspace{3mm} {p_J}(i) \triangleq \tfrac{1}
{n},i \in \{ 1, \ldots ,n\} , \hfill \\
  &\mathop  =  \limits^{{\text{(c)}}} \hspace{2mm} nI({Y_{aJ}};Q|{X_{aJ}}), \hspace{3mm}Q \triangleq ({Q_J},J), \hfill \\
  \end{aligned} \]
where (a) follows from the independence of the $K$-tuple ${X_{\mathcal{A},i}} = {\{ {X_{a,i}}\} _{a \in \mathcal{A}}}$
across $i$. In (b), $J \in \{ 1, \ldots ,n\} $ is a uniformly distributed RV independent of $X_{\mathcal{A}}^n$ and (c) follows from the independence of $J$ and $X_{\mathcal{A}}^n$.
\begin{multline*}
I(X_1^n; \ldots ;X_K^n|{W_a}) = \sum\nolimits_{k = 1}^{K - 1} {I(X_{k + 1}^n;X_1^n \ldots X_k^n|{W_a})}\\
\shoveleft{= \sum\nolimits_{k = 1}^{K - 1} {\sum\nolimits_{i = 1}^n {I(X_{k + 1,i}^{};X_1^n \ldots X_k^n|{W_a}X_{k + 1}^{i - 1})} }} \\
\geq \sum\nolimits_{k = 1}^{K - 1} {\sum\nolimits_{i = 1}^n {I(X_{k + 1,i}^{};X_{1,i}^n \ldots X_{k,i}^n|{W_a}X_{k + 1}^{i - 1} \ldots X_1^{i - 1})} } \\
{= nI(X_{1J}^{}; \ldots ;X_{KJ}^{}|Q).}
\end{multline*}
The converse follows, since $(X_{1J}^{}, \ldots ,X_{KJ}^{})$ has the same distribution as $(X_1^{}, \ldots ,X_K^{})$. The cardinality bound on $Q$ can be shown using the Carathéodory-Fenchel theorem [12, p. 310]. The boundary of $\mathfrak{T}({X_{\mathcal{A}}})$ is thus made up of $(K+1)$-tuples of the form ${\Delta _m} = \left( {{{\{ {\Delta _{2a}}\} }_{a \in \mathcal{A}}},{\Delta _1}} \right)$, where ${\Delta _m}$ is a continuous function from ${\hat{\mathcal{P}}_{{X_\mathcal{A}}}} \to \mathbb{R}_ + ^{K + 1}$, where ${\hat{\mathcal{P}}_{{X_\mathcal{A}}}}$ is compact (i.e., closed and bounded). Since the image of a compact set under a continuous function is compact, $\{ {\Delta _m}:{p_{Q|{X_\mathcal{A}}}} \in {\hat{\mathcal{P}}_{{X_\mathcal{A}}}}\}$  is compact. Moreover, since the increasing hull of a compact set is closed (see Lemma A.3, [2]), $\mathfrak{T}$ is closed. Convexity of $\mathfrak{T}$ follows from arguments similar to the $K = 2$ case [2].         
\end{proof}

\begin{figure}[!t]
\centering
\includegraphics[width=1.2in]{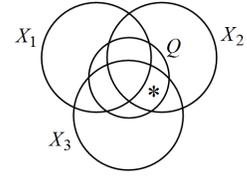}
\caption{Denoting the $I$-Measure of RV $Q$ by $\mu^*$, the only atom on which $\mu^*$ is nonvanishing is shown in the $I$-Diagram for the coordinate ${\Delta _{21}}$ on the boundary of $\mathfrak{T}({X_1};{X_2};{X_3})$}\vspace{-1em}
\label{fig_Idiag}
\end{figure}

\vspace{3mm}
\begin{proof}[Proof (sketch) for Theorem 4]
First note that $\mathfrak{T}({X_{\mathcal{A}}})$ intersects each of the $(K+1)$ axes, since any $K$-tuple of coordinates can be made simultaneously zero by choosing an appropriate Q. The case for $K = 2$ was already shown in [2]. For the intercept $\Delta _{21}^{\operatorname{int} }({X_1};{X_2};{X_3})$,
\[\begin{aligned}
  \Delta_{21}^{\operatorname{int}} &= \mathop {\inf }\limits_{\substack{
   I({X_3}{X_1};Q|{X_2}) = 0 \\ 
   I({X_1}{X_2};Q|{X_3}) = 0 \\ 
  I({X_1};{X_2}|Q) + I({X_1}{X_2};{X_3}|Q) = 0 
}} I({X_2}{X_3};Q|{X_1}) \hfill \\
  &\leq \mathop {\inf }\limits_{\substack{ 
   H(Q|{X_2}) = H(Q|{X_3}) = 0 \\ 
  I({X_1};{X_2}|Q) + I({X_1}{X_2};{X_3}|Q) = 0 
}}  {H(Q|{X_1})}, \hfill \\ 
\end{aligned} \]
since if $H(Q|{{X}_{2}})=H(Q|{{X}_{3}})=0,$ then $I({{X}_{3}}{{X}_{1}};Q|{{X}_{2}})=$ $I({{X}_{1}}{{X}_{2}};Q|{{X}_{3}})=0$ and $I({{X}_{2}}{{X}_{3}};Q|{{X}_{1}})=H(Q|{{X}_{1}}).$ For the converse, we want to show $\operatorname{LHS}\ge \operatorname{RHS}.$ This holds, since if $I({{X}_{3}}{{X}_{1}};Q|{{X}_{2}})=I({{X}_{1}}{{X}_{2}};Q|{{X}_{3}})=0$, then $H(Q|{{X}_{2}})$ $=H(Q|{{X}_{3}})=0$ and  $I({{X}_{2}}{{X}_{3}};Q|{{X}_{1}})=H(Q|{{X}_{1}})$. 

In fact, under the given constraints, denoting the $I$-Measure of RV $Q$ by $\mu^*$, the only atom on which $\mu^*$ is nonvanishing for both  $I({{X}_{2}}{{X}_{3}};Q|{{X}_{1}})$ and $H(Q|{{X}_{1}})$, is the one shown in the $I$-Diagram in Fig. 2. It may be noted that for $K=2$, the proof for the converse is not trivial (see Lemma A.1, A.2 and the proof of Theorem 2.2 in [2]), since given $I({{X}_{1}};Q|{{X}_{2}})=I({{X}_{1}};{{X}_{2}}|Q)=0,$ it does not trivially follow that $I({{X}_{2}};Q|{{X}_{1}}) \ge H(Q|{{X}_{1}})$. However, just as shown above for $K=3$, for $K\ge 3$ onwards, $\mu^*$ is vanishing on all but one atom, which trivially then yields the converse. Similar arguments hold for all the other coordinates and for any general $K$. Finally the use of $\min$ instead of $\inf$ in the statement of the theorem is valid since $\mathfrak{T}({X_{\mathcal{A}}})$ is closed.                                      
\end{proof}


\begin{thebibliography}{99}
\bibitem{ref1}
S. Wolf and J. Wullschleger, ``New monotones and lower bounds in unconditional two-party computation,'' {\em IEEE Trans. Inf. Theory}, vol. 54, no. 6, pp. 2792--2797, 2008.
\vspace{-1.1mm}
\bibitem{ref2}
V. M. Prabhakaran and M. M. Prabhakaran, ``Assisted common information with an application to secure two-party sampling,'' {\em IEEE Trans. Inf. Theory}, vol. 60, no. 6, pp. 3413--3434, 2014.
\vspace{-1.1mm}
\bibitem{ref3} 
M. M. Prabhakaran and V. M. Prabhakaran, ``On secure multiparty sampling for more than two parties,'' {\em Proc. IEEE ITW}, pp. 99--103, 2012.
\vspace{-1.1mm}
\bibitem{ref4} 
S. Winkler and J. Wullschleger, ``On the efficiency of classical and quantum secure function evaluation,'' {\em IEEE Trans. Inf. Theory}, vol. 60, no. 6, pp. 3123--3143, 2014.
\vspace{-1.1mm}
\bibitem{ref5} 
N. J. Cerf, S. Massar, and S. Schneider, ``Multipartite classical and quantum secrecy monotones,'' {\em Phys. Rev. A}, vol. 66, no. 4, p. 042309, 2002.
\vspace{-1.1mm}
\bibitem{ref6} 
A. Shamir, R. Rivest, and L. Adleman, ``Mental poker,'' Massachusetts Institute of Technology, Technical Report LCS/TR-125, April 1979.
\vspace{-1.1mm}
\bibitem{ref7} 
Y. Wang and P. Ishwar, ``On unconditionally secure multi-party sampling from scratch,'' in {\em Proc. IEEE ISIT}, pp. 1782--1786, 2011.
\vspace{-1.1mm}
\bibitem{ref8}
P. G{\'a}cs and J. K\"{o}rner, ``Common information is far less than mutual information,'' {\em Problems Control Inf. Theory}, vol. 2, no. 2, pp. 149--162, 1973.
\vspace{-1.1mm}
\bibitem{ref9} 
A. D. Wyner, ``The common information of two dependent random variables,'' {\em IEEE Trans. Inf. Theory}, vol. 21, no. 2, pp. 163--179, 1975.
\vspace{-1.45\baselineskip}
\bibitem{ref10} 
R. M. Gray and A. D. Wyner, ``Source coding for a simple network,'' {\em Bell Syst. Tech. J.}, vol. 53, no. 9, pp. 1681--1721, 1974.
\vspace{-1.1mm}
\bibitem{ref11} 
W. Liu, G. Xu, and B. Chen, ``The common information of N dependent random variables,'' in {\em Proc. 48th Allerton Conf. on Comm., Control, and Computing}, pp. 836--843, 2010.
\vspace{-1.1mm}
\bibitem{ref12} 
I. Csisz{\'a}r and J. K\"{o}rner, {\em Information theory: coding theorems for discrete memoryless systems}. Cambridge University Press, 2011.
\vspace{-1.1mm}
\bibitem{ref13} 
K. S. Rao and V. M. Prabhakaran, ``A New Upperbound for the Oblivious Transfer Capacity of Discrete Memoryless Channels,'' in {\em Proc. IEEE ITW}, pp. 35--39, 2014.  
\vspace{-1.1mm}
\bibitem{ref14} 
R. Ahlswede and P. G{\'a}cs, ``Spreading of Sets in Product Spaces and Hypercontraction of the Markov Operator,'' {\em Ann. Probab.}, vol. 4, pp. 925-939, 1976.
\vspace{-1.1mm}
\bibitem{ref15} 
S. Beigi, A. Gohari, ``On the Duality of Additivity and Tensorization,'' {\em arXiv:1502.00827}, 2015.
\end{thebibliography}
\end{document}